\documentclass[11pt,a4paper]{article}
\usepackage{amsmath,amssymb,amsthm}
\usepackage[colorlinks=true,
            linkcolor=blue,
            citecolor=blue,
            urlcolor=blue,
            pdftitle={Syntactic Separation Implies Computational Indistinguishability: An Abstract Obstruction Theorem},
            pdfauthor={Fabio F.G. Buono},
            pdfkeywords={syntactic obstruction, local rewriting, Skolemization, indistinguishability, natural proofs}
           ]{hyperref}
\usepackage{cleveref}
\usepackage{enumitem}
\usepackage{microtype}
\usepackage{booktabs}
\usepackage{array}
\usepackage[margin=2.5cm]{geometry}
\newtheorem{theorem}{Theorem}[section]
\newtheorem{lemma}[theorem]{Lemma}
\newtheorem{corollary}[theorem]{Corollary}
\newtheorem{proposition}[theorem]{Proposition}
\newtheorem{definition}[theorem]{Definition}
\newtheorem{remark}[theorem]{Remark}
\newtheorem{example}[theorem]{Example}
\newtheorem{claim}[theorem]{Claim}
\newcommand{\R}{\mathcal{R}}          
\newcommand{\Rp}{\mathcal{R}'}        
\newcommand{\M}{\mathcal{M}}          
\newcommand{\Th}{\mathcal{T}}         
\newcommand{\Terms}{\mathcal{T}(\Sigma,V)}  
\newcommand{\Pos}{\mathrm{Pos}}             
\newcommand{\ctx}[2]{\mathcal{B}_{#1}(#2)}  
\newcommand{\restr}[2]{#1|_{#2}}            
\newcommand{\Fprot}{\mathcal{F}} 
\newcommand{\Inv}{\mathsf{Inv}}
\newcommand{\sk}{f}
\newcommand{\skb}{g} 
\newcommand{\Nat}{\mathbb{N}}
\newcommand{\BigOm}{\Omega}
\newcommand{\Inst}{\mathcal{I}} 
\newcommand{\Der}{\mathcal{D}}
\newcommand{\Adv}{\mathcal{A}}
\newcommand{\OI}{\mathsf{OI}}
\newcommand{\CSC}{\mathsf{T}_{\mathsf{CSC}}}
\newcommand{\SIP}{\mathsf{SIP}}
\title{Syntactic Separation Implies Computational Indistinguishability:\\
An Abstract Obstruction Theorem}

\author{
  Fabio Francesco Gabriele Buono\\
  \small Independent Researcher\\
  \small \href{https://orcid.org/0009-0004-9199-2793}{ORCID: 0009-0004-9199-2793}
}
\date{Preprint. \today}

\begin{document}

\maketitle

 \begin{abstract}
We prove that syntactic separation implies computational
indistinguishability.
A \emph{local syntactic system} $\R$ acts on terms within radius $r_0$
without consulting any model; when two Skolem functions are syntactically
separated in $\R$, no derivation can prove their equivalence (Case~1),
and any sound local extension requires $\BigOm(n)$ steps,
improving to $\BigOm(2^n)$ under clause-per-configuration encoding (Case~2).
Both bounds are new: the derivation-length lower bound does not appear
in prior work on Skolemization or saturation proving, and the
cryptographic reading, syntactic separation as ciphertext
indistinguishability, derivation cost as negligible advantage, is
original.
The same obstruction, as formal instances of Case~1 and Case~2,
governs the Natural Proofs barrier of Razborov and Rudich,
the Type Omitting Theorem, and the unconditional $\mathsf{AC}^0$ barrier
of Loff et al.~\cite{loff2026}.
\end{abstract}

\tableofcontents
\bigskip

 \section{Introduction}
\label{sec:intro}
 
\subsection{The phenomenon}

A syntactic system acts on the shape of expressions.
A semantic property concerns what those expressions mean in a model.
The two can be in conflict: a property that is visible to the model may
be completely invisible to the syntactic system, even when every
derivation the system can perform passes through terms in which the
property is present.

This paper proves that this conflict is not an accident of specific
systems but a structural phenomenon: whenever a property is
\emph{protected} from the rules of a syntactic system, no derivation
can cross the barrier, and any extension of the system that tries to do
so pays an exponential price.

The phenomenon appears across mathematics in forms that look unrelated:

\begin{itemize}
  \item In \textbf{proof theory}, the Skolem functions witnessing two
        different proofs of $\forall x\,\exists y.\,\phi(x,y)$ are
        semantically equal but a saturation-based prover cannot derive
        their equality, the relevant information is locked inside
        Skolem constants the rules cannot touch.

  \item In \textbf{cryptography}, a ciphertext produced by the
        Mixed-Radix One-Time Pad~\cite{buono2026mrotp} achieves Shannon
        perfect secrecy because the decomposition basis $B$ encodes a
        numerical fact invisible to any syntactic test on the ciphertext;
        the same structure underlies the original construction
        of~\cite{buono2012}.

  \item In \textbf{type theory}, the Type Omitting Theorem says that a
        theory omits a type $\Phi(x)$ if and only if $\Phi$ is not
        principal, no syntactic formula pins it down.
        A constructive type system is blind to extensional properties
        of the functions it types: it sees the proof term, not the
        function computed
        (Typological Invariance Principle, Remark~\ref{rem:typ-invariance}).

  \item In \textbf{circuit complexity}, the Natural Proofs barrier of
        Razborov and Rudich shows that no proof technique that is both
        useful (holds for random functions) and constructive (checkable in
        polynomial time) can establish a super-polynomial circuit lower
        bound, the semantic invariant of circuit complexity is
        protected from every local syntactic inspection of the circuit
        graph.
\end{itemize}

These are not analogies.
They are instances of a single structural theorem, proved here in its
abstract form and applied to each domain in Section~\ref{sec:connections}.

\subsection{The result}

This paper proves an abstract theorem: whenever a local syntactic
system $\R$ acts on a term structure in which a semantic invariant is
protected from all rules, no derivation in $\R$ can reach that
invariant (Case~1), and any sound extension that overcomes this
barrier requires $\BigOm(n)$ derivation steps, growing with
the number of independent witnesses, and improving to
$\BigOm(2^n)$ under clause-per-configuration encoding (Case~2).
The proof-theoretic structure of Case~2 is precisely the structure of
computational hiding in cryptography: syntactic separation between two
Skolem functions is ciphertext indistinguishability, and the derivation
cost lower bound is the adversary's negligible advantage.

The starting point is~\cite{buono2026}, which isolates the mechanism
for a specific calculus.
That work resolves an open question of~\cite{hetzl2020},
the incomparability of open induction ($\OI$) and clause set cycles
($\CSC$), by observing that the addition rules fire only when the
first argument of $+$ is $0$ or a successor, so a Skolem constant
is permanently unreachable.
From this observation,~\cite{buono2026} extracts the \emph{Syntactic
Invariance Principle} ($\SIP$, Lemma~5 of~\cite{buono2026}).
The present paper takes $\SIP$ as its structural starting point and proves
something strictly stronger: a two-case theorem parametric in the
signature, locality radius, and model, which applies to any rewriting
calculus and yields the cryptographic and complexity-theoretic
consequences described above.

The formal framework for the structural blindness argument is the
observational hierarchy of~\cite{buono2026obs}: $\R$ defines a
constrained observer $O_\R \prec O_\top$ blind to positions in
$\Fprot$.
Proposition~9.6 of~\cite{buono2026obs} shows observer-constrained
classes collapse unconditionally: under the profile observer
$O_{\mathrm{prof}}$ (which maps a string to its symbol-count vector),
$\mathbf{P}_{O_{\mathrm{prof}}} = \mathbf{NP}_{O_{\mathrm{prof}}}
\subsetneq \mathbf{P}$.
Theorem~\ref{thm:main} is an instance of this collapse in the
setting of term rewriting and Skolemization; whether it follows
as a formal corollary of Proposition~9.6 is Question~(Q6).

\subsection{Contributions}

\begin{enumerate}[label=(\arabic*)]
  \item \textbf{Abstract framework.}
        We formalize local syntactic systems, protected positions, and
        syntactic invariants (Section~\ref{sec:setup}) parametrically
        in the signature $\Sigma$, locality radius $r_0$, and model $\M$.
        This generality is what allows the same theorem to instantiate
        as an impossibility result in proof theory, a hiding bound in
        cryptography, and a barrier result in circuit complexity.

  \item \textbf{Two-case obstruction theorem.}
        Theorem~\ref{thm:main} establishes:
        \begin{itemize}
          \item \textbf{Case~1 (impossibility):} no derivation in $\R$
                proves $\sk \equiv_\M \skb$ when the two Skolem functions
                are syntactically separated
                (subsumes~\cite{buono2026}; Corollary~\ref{cor:OI-CSC});
          \item \textbf{Case~2 (lower bound):} any local extension $\Rp$
                of $\R$ requires $L \geq \BigOm(n)$ steps, improving to
                $\BigOm(2^n)$ under clause-per-configuration encoding
                (Corollary~\ref{cor:expbound}).
        \end{itemize}

  \item \textbf{Cryptographic connection.}
        Section~\ref{sec:crypto} establishes a formal correspondence
        between the proof-theoretic structure of Theorem~\ref{thm:main}
        and computational hiding in cryptography, and proves a
        non-extraction corollary (Corollary~\ref{cor:nonextract}).

  \item \textbf{Cross-domain connections.}
        Section~\ref{sec:connections} shows that the Type Omitting
        Theorem and the Natural Proofs barrier of Razborov and Rudich
        are instances of the same obstruction.
        The result of~\cite{loff2026} (the first unconditional Natural
        Proofs barrier) is a quantitatively precise instantiation of
        Corollary~\ref{cor:natural-proofs}(i) for $\mathsf{AC}^0$-natural
        proofs, and provides the first concrete confirmation of the
        observational framework in circuit complexity.
\end{enumerate}

\subsection{Relation to prior work}

\paragraph{Proof theory.}
Case~1 takes $\SIP$~\cite{buono2026} as its structural starting point
but applies it in a strictly more general setting, arbitrary local
syntactic systems, not only the superposition calculus, and with a
formal two-case structure not present in~\cite{buono2026}.
Case~2 (the derivation-length lower bound) is entirely new.
Prior work on clause set cycles~\cite{hetzl2020,hetzl2022,hetzl2023,
hetzlweiser2025} studies provability and unprovability of specific
formulas in specific proof systems; it does not establish lower bounds
on derivation length for families of extensions of a given system,
which is what Case~2 provides.

\paragraph{Cryptography.}
The correspondence between the proof-theoretic structure and
computational hiding is new.
The Mixed-Radix One-Time Pad of~\cite{buono2026mrotp} is a known
construction; what is new here is the identification of its security
as an instance of Case~1 of Theorem~\ref{thm:main}, placing it on
the observational axis rather than the computational one.

\paragraph{Natural Proofs.}
The Razborov--Rudich barrier~\cite{razborov1994} is a known result.
The present paper does not reprove it.
What is new is the following:
\begin{itemize}
  \item An unconditional lower bound of $\BigOm(n)$
        (Corollary~\ref{cor:natural-proofs}(i)) on the inspection
        cost of any large and constructive technique, requiring no
        pseudo-random generator assumption.
        This bound does not appear in~\cite{razborov1994}.
        A sharper quantitative bound for the subclass of
        $\mathsf{AC}^0$-natural proofs via a different technique
        is obtained in~\cite{loff2026};
        that result is an instance of Corollary~\ref{cor:natural-proofs}
        made quantitatively precise for that specific subclass.
  \item A structural separation: the framework isolates precisely
        which step requires the PRG (supplying $2^n$ locally
        indistinguishable instances) and which does not (the blindness
        of any local technique to circuit complexity).
        This separation is not made explicit in~\cite{razborov1994}.
  \item A unifying explanation: both the barrier and the lower bound
        are consequences of the observational blindness of
        $\mathcal{P}$ to the circuit complexity of $f$.
        The Natural Proofs barrier is not a computational phenomenon
        but an observational one, and this is why it persists
        regardless of which of Impagliazzo's worlds
        we inhabit.
\end{itemize}

\paragraph{Type theory.}
The Type Omitting Theorem is classical.
What is new is Corollary~\ref{cor:type-omitting}: the identification
of extensional blindness in constructive type systems as an instance
of Case~1, and the lower bound on any decision procedure for
extensional equality as an instance of Case~2.

\paragraph{Unification.}
No prior work treats proof theory, cryptography, type theory, and
circuit complexity as instances of a single abstract theorem.
The contribution of Section~\ref{sec:connections} is not to reprove
known results but to show that they share a common structure,
a constrained observer, a protected semantic invariant, and an
impossibility or lower bound that is structural rather than
computational, and that this structure is captured exactly by
Theorem~\ref{thm:main}.

\subsection{Organization}

Section~\ref{sec:setup} gives all definitions.
Section~\ref{sec:main} states and proves Theorem~\ref{thm:main}.
Section~\ref{sec:instances} provides instantiations for proof theory.
Section~\ref{sec:crypto} develops the cryptographic connection.
Section~\ref{sec:connections} develops the cross-domain connections.
Section~\ref{sec:conclusions} summarizes the conclusions.
Section~\ref{sec:open} lists open questions.
Appendices~\ref{app:gadget} and~\ref{app:proofs} contain the gadget
construction and auxiliary proofs.

 \section{Definitions}
\label{sec:setup}
 
\subsection{Terms and positions}

\begin{definition}[Signature and terms]
\label{def:terms}
A \emph{signature} $\Sigma$ is a finite set of function symbols, each
with a fixed arity.
Let $V$ be a countable set of variables disjoint from $\Sigma$.
The set $\Terms$ of \emph{terms} over $\Sigma$ and $V$ is defined
inductively:
\begin{itemize}
  \item every $x \in V$ is a term;
  \item if $f \in \Sigma$ has arity $k \geq 0$ and
        $t_1,\ldots,t_k \in \Terms$, then $f(t_1,\ldots,t_k)$ is a term
        (constants are symbols of arity $0$).
\end{itemize}
\end{definition}

\begin{definition}[Positions]
\label{def:positions}
For a term $t \in \Terms$, the set $\Pos(t) \subseteq \Nat^*$ of
\emph{positions} is defined by:
\begin{itemize}
  \item $\varepsilon \in \Pos(t)$ (the root position);
  \item if $t = f(t_1,\ldots,t_k)$ and $p \in \Pos(t_i)$, then
        $i{\cdot}p \in \Pos(t)$.
\end{itemize}
The \emph{subterm} of $t$ at position $p$ is written $\restr{t}{p}$.
The term obtained from $t$ by replacing $\restr{t}{p}$ with $s$ is
written $t[s]_p$.
The \emph{tree distance} $d(p,q)$ between two positions $p,q \in \Pos(t)$
is the number of edges in the unique path between them in the tree
structure of $t$.
\end{definition}

\begin{definition}[Context of radius $r$]
\label{def:context}
For $t \in \Terms$, $p \in \Pos(t)$ (Definition~\ref{def:positions}),
and $r \in \Nat$, the
\emph{context of radius $r$ around $p$ in $t$} is the partial function

\[
\begin{aligned}
  \ctx{r}{p}(t) \;:=\;
  &\bigl(\,q \;\mapsto\; \restr{t}{p \cdot q}\,\bigr)_{q \in \Nat^*,\,
  p{\cdot}q \in \Pos(t),\, d(p, p{\cdot}q) \leq r},\\
  &\quad \text{(equivalently, } |q| \leq r \text{, where } |q|
  \text{ is the number of steps from } p \text{ to } p{\cdot}q)
\end{aligned}
\]

mapping each extension $q$ of $p$ within distance $r$ to the
corresponding subterm $\restr{t}{p \cdot q}$.
Two terms $t$ and $t'$ have \emph{equal contexts of radius $r$ at $p$}
if $\ctx{r}{p}(t)$ and $\ctx{r}{p}(t')$ have the same domain and
agree as functions on that domain
(i.e., they are equal as partial functions, including having the
same set of defined positions).
\end{definition}

\begin{example}[Contexts]
\label{ex:context}
Let $t = s(a + b)$ and $p = 1$ (the position of $a{+}b$).
Then $\ctx{1}{p}(t)$ maps: $\varepsilon \mapsto a{+}b$ (the subterm at $p$
itself, i.e., $q = \varepsilon$); $1 \mapsto a$ (left child, $q = 1$);
$2 \mapsto b$ (right child, $q = 2$).
The context does not include the parent $s(a{+}b)$, since the parent
is a prefix of $p$, not an extension: Definition~\ref{def:context}
ranges over extensions $q$ with $p{\cdot}q \in \Pos(t)$, i.e.,
descendants of $p$, not ancestors.
$\ctx{0}{p}(t)$ maps only $\varepsilon \mapsto a{+}b$.
\end{example}

\subsection{Local syntactic systems}

\begin{definition}[Local syntactic system]
\label{def:local}
A \emph{local syntactic system} is a tuple
$\R = (\Sigma, V, \mathit{Rules}, r_0)$ where $\Sigma$ and $V$ are as
in Definition~\ref{def:terms}, $r_0 \in \Nat$ is the \emph{locality
radius}, and $\mathit{Rules}$ is a finite set of \emph{rewriting rules}.

Each rule $\rho \in \mathit{Rules}$ is specified by a finite
\emph{left-hand side} pattern $\ell_\rho \in \Terms$ and a finite
\emph{right-hand side} $r_\rho \in \Terms$, and defines a partial
function on pairs $(t, p)$ as follows.
Rule $\rho$ is \emph{applicable} at position $p$ in term $t$ if there
exists a substitution $\sigma$ (a map $V \to \Terms$) such that
$\sigma(\ell_\rho) = \restr{t}{p}$; when applicable, it produces
\[
  \rho(t, p) \;:=\; t[\sigma(r_\rho)]_p.
\]
We require the following two conditions.
\begin{enumerate}[label=(\roman*)]
  \item \textbf{Locality.} The applicability of $\rho$ at $(t,p)$ depends
        only on $\ctx{r_0}{p}(t)$ (Definition~\ref{def:context}).
        That is, if $\ctx{r_0}{p}(t) = \ctx{r_0}{p}(t')$ as partial
        functions, then $\rho$ is applicable at $p$ in $t$ if and only
        if it is applicable at $p$ in $t'$.
  \item \textbf{Model independence.} The applicability of $\rho$ at
        $(t,p)$ depends only on the syntactic form of $\ctx{r_0}{p}(t)$,
        not on the truth value of any formula in any model.
\end{enumerate}
A \emph{derivation} in $\R$ from a clause set $S$ is a finite or infinite
sequence $S = S_0, S_1, S_2, \ldots$ where each $S_{i+1}$ is obtained
from $S_i$ by selecting a term $t$ appearing in some literal of
some clause $C \in S_i$, choosing a position $p \in \Pos(t)$, and
applying rule $\rho \in \mathit{Rules}$ at $p$ in $t$; the literal
containing $t$ is updated by replacing $t$ with $\rho(t,p)$,
producing a new clause $C'$, and setting $S_{i+1} = S_i \cup \{C'\}$.
The \emph{length} of a finite derivation is the number of rule
applications.
\end{definition}

\begin{example}[Superposition calculus as a local system]
\label{ex:superposition}
The superposition calculus over the signature
$\Sigma = \{0, s, +\}$ with rules
\begin{align}
  0 + x &= x \tag{A1}\label{eq:A1} \\
  s(x) + y &= s(x + y) \tag{A2}\label{eq:A2}
\end{align}
is a local syntactic system with $r_0 = 1$.
Each rule inspects the outermost function symbol of the first argument
of $+$: this symbol lies within distance $1$ of the application
position $p$ in $t$ (i.e., within $\ctx{1}{p}(t)$).
Both left-hand sides are finite; applicability does not consult any model.
\end{example}

\begin{definition}[Soundness]
\label{def:sound}
Let $\M$ be a first-order $\Sigma$-structure.
A local syntactic system $\R$ is \emph{sound with respect to $\M$} if
every clause derivable by $\R$ from a set of clauses that are valid in $\M$
is also valid in $\M$
(a clause is \emph{valid in $\M$} if every ground instance of its
disjunction is satisfied by $\M$).
\end{definition}

\begin{remark}[Theory-relative soundness]
\label{rem:theory}
Alternatively, one may fix a theory $\Th$ and require that every clause
derivable by $\R$ from axioms of $\Th$ is valid in every model of $\Th$.
This is equivalent to soundness with respect to every model of $\Th$.
In what follows we work with a fixed model $\M$; the theory-relative
version is formally analogous and all results carry over.
\end{remark}

\subsection{Protected positions}

\begin{definition}[Protected set]
\label{def:protected}
Let $\R = (\Sigma, V, \mathit{Rules}, r_0)$ be a local syntactic system.
A set of positions $\Fprot$ is \emph{protected with respect to $\R$}
if it satisfies the following two conditions for every term $t$ and every
$p \in \Fprot \cap \Pos(t)$:
\begin{enumerate}[label=(\roman*)]
  \item \textbf{No rewriting at $p$.}
        No rule $\rho \in \mathit{Rules}$ is applicable at position $p$
        in $t$.
  \item \textbf{No rewriting inside $p$.}
        No rule $\rho \in \mathit{Rules}$ is applicable at any position
        $q \in \Pos(t)$ such that $q$ is a proper extension of $p$
        (i.e., $q = p \cdot q'$ for some nonempty $q' \in \Nat^*$),
        meaning no rule rewrites inside the subterm $\restr{t}{p}$.
\end{enumerate}
In particular, $\Fprot$ is not a fixed set of strings in $\Nat^*$;
it is a \emph{predicate} on pairs $(t, p)$ with $p \in \Pos(t)$,
which must hold uniformly across all terms appearing in any
derivation from $\R$.
Concretely, one writes $p \in \Fprot(t)$ to mean that position $p$
in term $t$ is protected.
\end{definition}

\begin{remark}[Protected sets are a property of $\R$, not an axiom]
\label{rem:protected-property}
Definition~\ref{def:protected} is a property that can be verified by
inspecting the left-hand sides of the rules in $\mathit{Rules}$.
It is not an additional axiom imposed on $\R$.
In concrete systems, one verifies it by checking that certain
unification problems have no solution (e.g., first-symbol clashes;
see Section~\ref{sec:instances}).
\end{remark}

\begin{example}[Protected positions in the superposition calculus]
\label{ex:frozen}
In Example~\ref{ex:superposition}, let $a$ and $b$ be fresh Skolem
constants (distinct from $0$ and from every term of the form $s(t)$).
The set $\Fprot$ of positions headed by $a$ or $b$ is protected with
respect to $\R$.
This is verified in Lemma~\ref{lem:frozen-super} below.
\end{example}

\subsection{Syntactic invariants}

\begin{definition}[Syntactic invariant]
\label{def:invariant}
Let $\R = (\Sigma, V, \mathit{Rules}, r_0)$ be a local syntactic system
and let $\Fprot$ be a protected set with respect to $\R$.
A property $\Inv$ of terms is a \emph{syntactic invariant for $\R$
anchored to $\Fprot$} if the following five conditions hold.
\begin{enumerate}[label=(\roman*)]
  \item \textbf{Local checkability.}
        Whether $\Inv(t)$ holds depends only on the subterms of $t$ at
        positions within distance $r_0$ of some fixed finite set of
        reference positions.
  \item \textbf{Initialization.}
        $\Inv(t)$ holds for every term $t$ in the initial clause set.
  \item \textbf{Preservation.}
        For every $t$ with $\Inv(t)$ and every rule $\rho \in \mathit{Rules}$
        applicable at some position $p$ in $t$, we have
        $\Inv(\rho(t,p))$.
  \item \textbf{Anchorage.}
        If $\Inv(t)$ holds and $\Inv(t[s]_p)$ fails for some term $s$
        and position $p$, then $p \in \Fprot$.
  \item \textbf{Coherence.}
        There exist two designated function symbols $f_0, f_1 \in \Sigma$
        such that: no literal of any clause $C$ with $\Inv$ holding on
        all its terms contains both $f_0(\cdot)$ and $f_1(\cdot)$
        as subterms at comparable positions
        (i.e., positions syntactically equated by the literal,
        such as the two sides of an equation $f_0(t_1) = f_1(t_2)$).
        ($\Inv$ is applied term-by-term to the terms in each literal.
        In the application to Skolemization, $f_0 = \sk$ and $f_1 = \skb$,
        and the excluded literal is $\sk(x) = \skb(x)$.)
\end{enumerate}
\end{definition}

\begin{remark}[Role of anchorage]
\label{rem:anchorage}
Condition~(iv) ensures that $\Inv$ is not merely preserved by $\R$ for
trivial reasons (e.g., because $\R$ never produces any new terms at all),
but that its violation is structurally tied to the protected set $\Fprot$.
Without~(iv), one could take $\Inv(t) = \top$ (always true), which
satisfies (i)--(iii) trivially (the conclusion is always true)
but carries no information.
\end{remark}

\subsection{Skolem functions and syntactic separation}

\begin{definition}[Skolem functions and semantic equivalence]
\label{def:skolem}
Let $\psi = \forall x\,\exists y.\,\phi(x,y)$ be a first-order formula.
A \emph{Skolemization} of $\psi$ is a formula
$\psi_{\sk} = \forall x.\,\phi(x, \sk(x))$
where $\sk$ is a fresh function symbol, the \emph{Skolem function} for
the existential quantifier.

Two Skolem functions $\sk$ and $\skb$ (arising from two Skolemizations of
$\psi$) are \emph{semantically equivalent in $\M$} if
$\sk(x) = \skb(x)$ holds in $\M$ for all $x$ in the domain of $\M$.
We write $\sk \equiv_\M \skb$ as shorthand for this condition, and
drop the subscript when $\M$ is clear from context.
\end{definition}

\begin{remark}[Relation to~\cite{hetzl2023}]
\label{rem:hetzl2023}
The interaction between Skolem functions and induction in
saturation-based theorem provers is studied in depth
in~\cite{hetzl2023}, where Skolem symbols are shown to take the
role of induction parameters.
The present definition of syntactic separation
(Definition~\ref{def:separation} below) makes precise the structural
condition under which two such parameters cannot be compared by
the rewriting system.
\end{remark}

\begin{definition}[Syntactic separation]
\label{def:separation}
Let $\R$ be a local syntactic system with protected predicates
$\Fprot_\sk$ and $\Fprot_\skb$ for $\sk$ and $\skb$ respectively.
Skolem functions $\sk$ and $\skb$ are \emph{syntactically separated in
$\R$} if the following four conditions hold:
\begin{enumerate}[label=(\roman*)]
  \item every occurrence of $\sk$ in the initial clause set appears at a
        position in $\Fprot_\sk$;
  \item every occurrence of $\skb$ in the initial clause set appears at a
        position in $\Fprot_\skb$;
  \item $\Fprot_\sk$ and $\Fprot_\skb$ are both non-empty;
  \item $\Fprot_\sk$ and $\Fprot_\skb$ are disjoint: no position
        $(t, p)$ belongs to both.
\end{enumerate}
\end{definition}

\begin{remark}[On condition~(iii) of Definition~\ref{def:separation}]
\label{rem:sep-iii}
Condition~(iii) requires only that the protected sets are non-empty.
The stronger property, that no derivation in $\R$ can produce a clause
containing a literal of the form $\sk(t_1) = \skb(t_2)$, is not
assumed but proved in Case~1 of Theorem~\ref{thm:main}, using the
syntactic invariant $\Inv$ and the disjointness of $\Fprot_\sk$ and
$\Fprot_\skb$.
\end{remark}

 \section{Main Theorem}
\label{sec:main}
 
With all definitions in place, we can state the central result.
Theorem~\ref{thm:main} below applies to any local syntactic system
satisfying the conditions of Section~\ref{sec:setup}; the two
instantiations of Section~\ref{sec:instances} verify those conditions
in the specific settings of Skolemization and gadget complexity.

\begin{theorem}[Local Syntactic Obstruction]
\label{thm:main}
Let $\R = (\Sigma, V, \mathit{Rules}, r_0)$ be a local syntactic system,
sound with respect to a model $\M$
(Definitions~\ref{def:local} and~\ref{def:sound}).
Let $\Fprot$ be a set of positions protected with respect to $\R$
(Definition~\ref{def:protected}), and let $\Inv$ be a syntactic invariant
for $\R$ anchored to $\Fprot$ (Definition~\ref{def:invariant}).
Let $\sk$ and $\skb$ be Skolem functions that are semantically equivalent
in $\M$ (Definition~\ref{def:skolem}) and syntactically separated in $\R$
(Definition~\ref{def:separation}), with all occurrences of $\sk$
in the initial clause set lying in $\Fprot_\sk$ and all occurrences
of $\skb$ lying in $\Fprot_\skb$ (where $\Fprot = \Fprot_\sk \cup \Fprot_\skb$
is the protected set of Definition~\ref{def:protected}).
Assume further that $\sk$ and $\skb$ are \emph{fresh} with respect to
$\Sigma$: neither symbol appears in the left-hand side or right-hand
side of any rule in $\mathit{Rules}$.

Then the following two conclusions hold.

\medskip
\noindent\textbf{Case~1 (Impossibility).}
No derivation in $\R$ proves $\sk \equiv \skb$
(i.e., no derivation produces a clause asserting $\sk(x) = \skb(x)$
for all $x$).

\medskip
\noindent\textbf{Case~2 (Lower bound).}
Let $\Rp$ be any extension of $\R$ satisfying:
\begin{itemize}
  \item $\Rp$ is local with the same radius $r_0$ and finite patterns;
  \item $\Rp$ is sound with respect to $\M$;
  \item $\Rp$ is \emph{refutation-based}: a proof of $\sk \equiv \skb$
        proceeds by refuting $\neg(\sk(x) = \skb(x))$ from the clause
        set $\Inst_n$, and resolution of disjunctive clauses requires
        selecting a specific disjunct via unification.
\end{itemize}
Suppose there exists a family of instances
$\{\Inst_n\}_{n \geq 1}$ such that for each $n \geq 1$:
\begin{itemize}
  \item $\Inst_n$ contains $N(n) = 2^n$ global configurations
        $C_1^{(n)}, \ldots, C_{N(n)}^{(n)}$;
  \item any two distinct configurations $C_i^{(n)}$ and $C_j^{(n)}$
        ($i \neq j$) are indistinguishable within radius $r_0$, i.e.,
        for every position $p$ relevant to any rule of $\Rp$,
        $\ctx{r_0}{p}(C_i^{(n)}) = \ctx{r_0}{p}(C_j^{(n)})$;
  \item the configurations are globally distinct: $C_i^{(n)} \neq C_j^{(n)}$
        in $\M$ for $i \neq j$.
\end{itemize}
Let $c$ be the maximum number of globally distinct configurations that
any single rule application in $\Rp$ can distinguish (a constant depending
only on $\Sigma$, $r_0$, and the rule set of $\Rp$, and in particular
independent of $n$).
A derivation in $\Rp$ \emph{correctly proves $\sk \equiv \skb$ on $\Inst_n$}
if it produces, for each gadget $G_i$, a refutation of the gadget
clause by selecting a specific disjunct $x = a_i^{b_i}$ as witness
(thereby covering all $2^n$ configurations).
Every such derivation has length
\[
  L \;\geq\; \BigOm(n).
\]
Under the stronger clause-per-configuration encoding of $\Inst_n$
(see Remark~\ref{rem:strongbound} below), this improves to
$L \geq \BigOm(2^n)$, at the cost of $|\Inst_n| = 2^n$.
\end{theorem}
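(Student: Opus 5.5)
For Case~1, the plan is an induction on derivation length, with the invariant $\Inv$ carried along as the inductive hypothesis. Let $S = S_0, S_1, \ldots$ be a derivation in $\R$. I would prove by induction on $i$ that $\Inv$ holds for every term occurring in every literal of every clause of $S_i$. The base case is exactly Initialization (Definition~\ref{def:invariant}(ii)). For the inductive step, $S_{i+1} = S_i \cup \{C'\}$, where $C'$ arises from a clause $C \in S_i$ by replacing a term $t$ with $\rho(t,p)$. The terms of $C'$ other than $\rho(t,p)$ already satisfy $\Inv$ by hypothesis, and $\Inv(\rho(t,p))$ holds by Preservation (iii).

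Two further checks are needed. First, because $\sk$ and $\skb$ are fresh, no rule introduces either symbol through its right-hand side. Second, since $\Fprot = \Fprot_\sk \cup \Fprot_\skb$ is protected (Definition~\ref{def:protected}), no rule fires at or inside a protected position. So every occurrence of $\sk$ stays in $\Fprot_\sk$ and every occurrence of $\skb$ stays in $\Fprot_\skb$, and disjointness (Definition~\ref{def:separation}(iv)) prevents them from merging. Anchorage (iv) confirms consistency: any replacement that would break $\Inv$ must occur at a position of $\Fprot$, where by protectedness no rule applies.

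With this invariant in hand, Coherence (v), instantiated with $f_0 = \sk$ and $f_1 = \skb$, says that no literal of any clause in any $S_i$ equates $\sk(\cdot)$ with $\skb(\cdot)$. In particular the literal $\sk(x) = \skb(x)$ never appears, which gives Case~1. Soundness with respect to $\M$ is not needed for this direction. Semantic equivalence $\sk \equiv_\M \skb$ enters only to show that the obstruction is genuinely syntactic and not a failure of truth.

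For Case~2, I would argue by counting the configurations distinguished per step. Fix a derivation in $\Rp$ that correctly proves $\sk \equiv \skb$ on $\Inst_n$. A correct derivation must, for each gadget $G_i$, select a specific disjunct $x = a_i^{b_i}$ via unification. Unification is a rule application, and by locality its applicability at $p$ depends only on $\ctx{r_0}{p}$. Since the configurations agree on all relevant radius-$r_0$ contexts, a single application is insensitive to which configuration is present beyond the constant $c$ it can separate. I would then prove the following claim: after $k$ rule applications, the derivation has committed to witnesses in at most $c \cdot k$ independent binary choices. Equivalently, the set of configurations not yet refuted is either all of them or shrinks by at most a factor of $2^{c}$ per step. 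The bounded-per-step distinguishing power is only a constant per application, so even under $\Rp$ the indistinguishability hypothesis cannot be circumvented except at unit cost per choice.

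Refuting the negated goal for all $2^n$ configurations requires all $n$ independent gadget choices $b_1, \ldots, b_n$ to be fixed. Hence $c\,L \geq n$, that is, $L \geq n/c = \BigOm(n)$. Under the clause-per-configuration encoding (Remark~\ref{rem:strongbound}), each of the $2^n$ configurations is a separate clause that must itself be resolved away, and a single application touches a single clause. By pigeonhole, at most $c$ configurations are discharged per step, giving $L \geq 2^n/c = \BigOm(2^n)$.

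I expect the main obstacle to be making rigorous the claim that one rule application ``distinguishes at most $c$ configurations''. This requires a precise account of how a refutation commits to disjuncts. My approach would be to define, for each derivation prefix, the set of configurations consistent with all unifiers selected so far. I would then show from the locality hypothesis that this set changes in a manner determined only by the radius-$r_0$ contexts, which are identical across configurations. Hence the change is bounded by the number of $r_0$-context patterns of the rule set, which is finite and independent of $n$. That finite number is what I would take as $c$.
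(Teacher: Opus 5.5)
Your Case~1 argument is the paper's: induction via Initialization and Preservation, freshness and protectedness keeping $\sk$ and $\skb$ inside $\Fprot_\sk$ and $\Fprot_\skb$, then Coherence with $f_0=\sk$, $f_1=\skb$. You are also right that soundness plays no role there.

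The gap is in Case~2, in the per-step capacity claim that carries your $\BigOm(n)$ bound. You want to get ``each application commits at most $c$ binary choices'' from locality plus the fact that radius-$r_0$ contexts are identical across configurations, bounding the change by the number of context patterns. That derivation does not go through.
\begin{itemize}
  \item If every relevant context is the same in all configurations, locality only tells you that each application behaves identically on all of them. It says nothing about how many gadget disjuncts a single step selects.
  \item Choosing $a_i^0$ rather than $a_i^1$ is a decision about which literal to resolve on. It is not information read off the context.
  \item A count of context patterns bounds how many classes one application can split the configurations into (at most $\log_2 c$ bits). It does not bound how many gadget clauses the application resolves. Moreover, correctness as defined in the statement requires committing to one disjunct per gadget, not identifying the configuration.
  \item The theorem's $c$ counts configurations. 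So ``at most $c\cdot k$ binary choices'' and ``shrinks by a factor $2^c$ per step'' do not follow from its definition. In the strong bound you use $c$ in a third sense, configurations discharged per step.
\end{itemize}

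The missing ingredient is the spatial separation built into the gadget construction (Appendix~\ref{app:gadget}). Each gadget clause mentions only its own constants $a_i^0, a_i^1$, and a rule application rewrites one term in one literal of one clause, so it overlaps at most one gadget. The paper combines this with Claim~\ref{claim:distinguish}: gadget $G_i$ can only be resolved by an application acting on the subterm headed by $a_i^0$ or $a_i^1$. Hence the $n$ resolving applications are pairwise distinct, giving $L \ge n$ directly with no dependence on $c$.

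You already use exactly this fact, that one application touches one clause, in the clause-per-configuration case. There it gives $L \ge 2^n$, and the division by $c$ is unnecessary. Using the same fact for the gadget encoding, in place of the pattern-counting argument, closes the gap.
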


\begin{proof}[Proof of Case~1]
The proof follows the structure of $\SIP$ (Lemma~5 of~\cite{buono2026}),
generalizing it to the abstract setting of Definition~\ref{def:invariant}.

By condition~(ii) of Definition~\ref{def:invariant}, $\Inv$ holds on
the initial clause set.
By condition~(iii), every rule application preserves $\Inv$.
By induction on derivation length, $\Inv$ holds on every term in every
clause reachable by $\R$.

Suppose for contradiction that a derivation $\Der$ in $\R$
from the initial clause set $S$ proves $\sk \equiv \skb$, i.e., produces
a clause $E$ that contains a literal of the form $\sk(x) = \skb(x)$
for some term $x$.

By the induction argument above, every clause in every derivation from
$S$ satisfies $\Inv$.
In particular, $E$ satisfies $\Inv$.

We derive a contradiction by showing that $\Inv(E)$ is incompatible
with $E$ containing a literal of the form $\sk(x) = \skb(x)$.
By the hypotheses of Theorem~\ref{thm:main}, all occurrences of $\sk$
in the initial clause set lie in $\Fprot_\sk$, and all occurrences of
$\skb$ lie in $\Fprot_\skb$ (Definition~\ref{def:separation}).
By the freshness hypothesis of Theorem~\ref{thm:main}, no rule in
$\mathit{Rules}$ mentions $\sk$ or $\skb$; consequently, rule
applications cannot introduce new occurrences of $\sk$ or $\skb$
at positions outside $\Fprot_\sk \cup \Fprot_\skb$.
By Definition~\ref{def:protected}, no rule of $\R$ is applicable at
any position in $\Fprot_\sk$ or $\Fprot_\skb$.
Therefore the occurrences of $\sk$ remain within $\Fprot_\sk$ and
those of $\skb$ remain within $\Fprot_\skb$ throughout the derivation.
(Condition~(iv) (anchorage) confirms that any hypothetical violation
of $\Inv$ would require acting on $\Fprot$, but since $\Inv$ is
preserved, no such violation occurs.)

By condition~(v) of Definition~\ref{def:invariant} (coherence),
with $f_0 = \sk$ and $f_1 = \skb$ as the designated symbols:
no literal of any clause satisfying $\Inv$ has the form
$\sk(t_1) = \skb(t_2)$ (i.e., contains both $\sk(\cdot)$ and
$\skb(\cdot)$ at positions equated by the literal).
But $E$ contains the literal $\sk(x) = \skb(x)$ by assumption,
which has exactly this form, a direct contradiction.
Therefore no derivation of $\sk \equiv_\M \skb$ exists in $\R$.
\end{proof}

\begin{proof}[Proof of Case~2]
The proof proceeds in four steps.

\medskip
\noindent\textbf{Step~1: Derivation trees.}
Fix $n \geq 1$ and the instance $\Inst_n$ with
$N = 2^n$ global configurations $C_1, \ldots, C_N$.
A derivation in $\Rp$ that correctly proves $\sk \equiv \skb$ on
$\Inst_n$ must produce a correct conclusion for every configuration
$C_i$.
We represent the derivation as a tree $\Der$: internal nodes are
rule applications; leaves are clauses at which the refutation goal
$\neg(\sk(x) = \skb(x))$ is resolved for some specific input
(i.e., some $x = a_i^{b_i}$ is selected as the witness).
Each branch of $\Der$ handles a disjoint subset of configurations.

\medskip
\noindent\textbf{Step~2: Capacity of local rules.}
Each rule $\rho \in \Rp$ is applicable based on $\ctx{r_0}{p}(t)$ for
the relevant position $p$.
Since any two distinct configurations $C_i$ and $C_j$ are
indistinguishable within radius $r_0$ (hypothesis of Case~2), a single
application of $\rho$ partitions the $N$ configurations into at most $c$
classes, where $c$ is the number of distinct local contexts of radius
$r_0$ that $\rho$ can observe, equivalently, the maximum number of
globally distinct configurations that a single rule application can
distinguish (as defined in the statement of Case~2).
By the finiteness of patterns (Definition~\ref{def:local}(i)), $c$
is finite and independent of $n$.

\medskip
\noindent\textbf{Step~3: Branching argument via syntactic distinguishability.}

We work with the specific family $\{\Inst_n\}$ from
Lemma~\ref{lem:gadget} (constructed and verified in
Section~\ref{sec:inst-case2}).
Recall that $\Inst_n$ is built from $n$ independent gadgets
$G_1, \ldots, G_n$, each contributing two local constants $a_i^0$ and
$a_i^1$.
A global configuration $C_{(b_1,\ldots,b_n)}$ selects one constant
$a_i^{b_i}$ from each gadget $G_i$.

We say two configurations $C_{(b_1,\ldots,b_n)}$ and
$C_{(b_1',\ldots,b_n')}$ are \emph{$i$-distinguished} by a rule
application if the application acts on a position where the constants
$a_i^{b_i}$ and $a_i^{b_i'}$ differ.

\begin{claim}
\label{claim:distinguish}
For each gadget index $i \in \{1,\ldots,n\}$ and each pair of
configurations that differ only in gadget $i$ (i.e., $b_j = b_j'$ for
$j \neq i$ and $b_i \neq b_i'$), every correct derivation of
$\sk \equiv \skb$ on $\Inst_n$ must contain at least one rule
application that $i$-distinguishes them.
\end{claim}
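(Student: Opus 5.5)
The plan is a fooling (adversary) argument by contradiction. Fix $i$ and a pair of configurations $C = C_{(b_1,\ldots,b_n)}$ and $C' = C_{(b_1',\ldots,b_n')}$ that agree outside gadget $i$ and differ at gadget $i$. Suppose some correct derivation $\Der$ of $\sk \equiv \skb$ on $\Inst_n$ contains no rule application that $i$-distinguishes $C$ and $C'$. We will show that $\Der$ then processes $C$ and $C'$ identically, and that this contradicts correctness. The argument uses only locality (Definition~\ref{def:local}(i)), the refutation-based format of Case~2, and the independence of the gadgets from Lemma~\ref{lem:gadget}.

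\textbf{Step 1 (simulation).} Let $\pi$ be the bijection on terms that swaps $a_i^{b_i}$ and $a_i^{b_i'}$ and fixes all other symbols. By induction on the length of $\Der$, we show that every clause produced along the branch handling $C$ has a $\pi$-image produced along the branch handling $C'$, with the same rules applied at the same positions.
\begin{itemize}
  \item Applicability is decided by $\ctx{r_0}{p}(t)$. No application $i$-distinguishes the pair, so at every position that is acted on the radius-$r_0$ contexts do not contain the differing constants. Hence $\ctx{r_0}{p}(t) = \ctx{r_0}{p}(\pi t)$, and the same rule fires with the same substitution up to $\pi$.
  \item The gadgets are independent (Lemma~\ref{lem:gadget}) and the rules of $\Rp$ have finite patterns. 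So $\pi$ commutes with rule application whenever the rewritten position avoids the gadget-$i$ constants.
\end{itemize}

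\textbf{Step 2 (witness selection).} Correctness requires that for gadget $G_i$ some leaf refutes the gadget clause by selecting the disjunct $x = a_i^{b_i}$ for $C$. By the refutation-based hypothesis, this selection is a unification step against that specific constant. A unification step that singles out $a_i^{b_i}$ is, by definition, an application acting at a position where $a_i^{b_i}$ and $a_i^{b_i'}$ differ, i.e.\ it $i$-distinguishes the pair. If no such step exists, then by Step~1 the same leaf, transported by $\pi$, selects the same disjunct for $C'$ as for $C$. Thus the derivation commits to one witness for both configurations.

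\textbf{Step 3 (contradiction).} The configurations are globally distinct in $\M$, and only one of $a_i^{b_i}$, $a_i^{b_i'}$ is the correct witness for each configuration. A single shared witness therefore makes one of the two refutations rest on a clause false in $\M$. This contradicts the soundness of $\Rp$ with respect to $\M$, so $\Der$ must contain an application that $i$-distinguishes $C$ and $C'$.

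The main obstacle is Step~2, which has to rule out the derivation selecting the correct witness ``implicitly'', with no step that is literally located at a differing constant. Here the loose definition of ``$i$-distinguished'' in the statement must be pinned down. The first approach is to read ``acts on a position where the constants differ'' as covering every unification whose substitution binds a variable to $a_i^{b_i}$. The refutation-based clause of Case~2, which requires disjunct selection via unification, is what makes that reading legitimate.
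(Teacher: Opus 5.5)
Your Step~2 already contains the paper's entire proof, but you then close the argument through a branch that fails at Step~3. The first three sentences of Step~2 are, in substance, what the paper does:
\begin{enumerate}[label=(\alph*)]
  \item the definition of a derivation that correctly proves $\sk \equiv \skb$ on $\Inst_n$ demands, for gadget $G_i$, a refutation of the gadget clause that selects a disjunct $x = a_i^{b_i}$;
  \item by the refutation-based hypothesis, this selection is a unification step acting on the subterm headed by $a_i^0$ or $a_i^1$;
  \item such a step $i$-distinguishes the pair.
\end{enumerate}
That alone contradicts the assumption that $\Der$ has no $i$-distinguishing application, and the paper stops there. Its remark that $\Der$ acts identically on both configurations, justified by opacity of the fresh constants, is not what produces the contradiction. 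You instead open a branch ``if no such step exists'', which your previous sentence has already excluded. You then close that branch in Step~3 with an appeal to soundness that does not work.

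\textbf{Why Step~3 fails.} Soundness with respect to $\M$ (Definition~\ref{def:sound}) only constrains what is derived from clauses valid in $\M$. Since $\sk \equiv_\M \skb$, both disjuncts $\sk(a_i^0) \neq \skb(a_i^0)$ and $\sk(a_i^1) \neq \skb(a_i^1)$ are false in $\M$, and so is the negated goal $\neg(\sk(x) = \skb(x))$. So $\Inst_n$ is not valid in $\M$, every refutation from it ``rests on a clause false in $\M$'', and soundness says nothing about which witness is selected.

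For the same reason, neither $a_i^0$ nor $a_i^1$ is semantically the ``correct'' witness. They play symmetric roles in $\M$, and the distinctness in Lemma~\ref{lem:gadget}(iii) does not break that symmetry. A configuration is, per Appendix~\ref{app:gadget}, just a choice of which disjunct the refutation resolves. A derivation that used one witness for both $C$ and $C'$ would therefore violate the coverage requirement in the definition of correctness, as your opening paragraph anticipated, not soundness.

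\textbf{Step~1 is unnecessary and shaky.} Failing to $i$-distinguish the pair does not imply that the radius-$r_0$ contexts avoid the differing constants. An application at $p$ can have $a_i^{b_i}$ inside $\ctx{r_0}{p}(t)$ without acting at it; the right justification is opacity (the constants occur in no pattern of $\Rp$). Worse, transporting by $\pi$ sends a selection of $a_i^{b_i}$ to a selection of $a_i^{b_i'}$, not to ``the same disjunct''. So Step~1 does not even support the conclusion you draw from it in Step~2.

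\textbf{Fix.} Drop Steps~1 and~3 and finish inside Step~2 via the correctness definition. The refutation-based hypothesis is exactly what rules out ``implicit'' selection.
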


Proof of Claim~\ref{claim:distinguish}:
Suppose for contradiction that a derivation $\Der$ of
$\sk \equiv \skb$ on $\Inst_n$ never applies any rule at a
position where $a_i^0$ and $a_i^1$ differ.
Then, by the opacity of $a_i^0$ and $a_i^1$ (Lemma~\ref{lem:gadget}(ii)),
$\Der$ applies the same rule applications at the same positions
on $C_{(b_1,\ldots,0,\ldots,b_n)}$ and $C_{(b_1,\ldots,1,\ldots,b_n)}$
(the applicability of each rule is the same in both cases,
since neither $a_i^0$ nor $a_i^1$ appears in any LHS pattern).
By Lemma~\ref{lem:gadget}(i)--(ii), gadget $G_i$ contributes the clause
$\sk(a_i^0) \neq \skb(a_i^0) \vee \sk(a_i^1) \neq \skb(a_i^1)$.
A refutation of $\neg(\sk(x) = \skb(x))$ against this clause must at
some point resolve the disjunction, selecting $a_i^0$ or $a_i^1$ as the
witness for $x$.
Any rule performing this selection acts on the subterm headed by
$a_i^0$ or $a_i^1$, and hence $i$-distinguishes the two configurations.
Since $\Der$ never $i$-distinguishes them by assumption, it cannot
complete the refutation for this gadget, contradiction.
\hfill\(\square\)

\medskip
\noindent\textbf{Step~4: Derivation tree branching argument.}

We count the distinct rule applications in the derivation $\Der$
that are required to resolve each gadget.

To see this, observe that the $n$ gadgets are resolved \emph{independently}
by Claim~\ref{claim:distinguish}: for each gadget $i$, the derivation must
at some point apply a rule that distinguishes $a_i^0$ from $a_i^1$.
By spatial separation (Lemma~\ref{lem:gadget}(ii) and
Appendix~\ref{app:gadget}), the rule applied at gadget $i$ lies in a
context that does not overlap with any other gadget $j \neq i$.
Therefore the resolution of gadget $i$ branches independently of the
resolution of gadget $j$.

We use a per-configuration trace argument.
For each configuration $C_{(b_1,\ldots,b_n)}$, define its
\emph{gadget-$i$ trace} as the subsequence of rule applications in $\Der$
that act at a position overlapping with gadget $G_i$ in that configuration.

By Claim~\ref{claim:distinguish}, for every gadget $i$ and every pair of
configurations $(C, C')$ differing only at gadget $i$, the derivation
$\Der$ must contain at least one rule application that $i$-distinguishes
them.
By spatial separation (Appendix~\ref{app:gadget}, $d = 1$), a rule of
radius $r_0$ overlaps with at most one gadget at a time; thus the rule
$i$-distinguishing $C$ and $C'$ does not act on any other gadget $j$.

Fix gadget $i$.
The $2^{n-1}$ pairs of configurations differing only at gadget $i$ have
distinct contexts at all other gadgets.
Since any rule overlapping gadget $i$ sees only the local context of
radius $r_0$ around its application position, and since this context
does not include any other gadget (spatial separation), by the Locality
condition (Definition~\ref{def:local}(i)), the rule
$i$-distinguishing pair $(C,C')$ acts only on the local context of
radius $r_0$ around its position, which lies within gadget $G_i$
and is independent of the state of all other gadgets.
Therefore for each gadget $i$ there is at least one rule application
that $i$-distinguishes some pair of configurations.
Since the $n$ gadgets are spatially separated, these $n$ rule applications
are distinct.
Therefore
\[
  L \;\geq\; n \;=\; \BigOm(n).
\]
The stronger bound $L \geq \BigOm(2^n)$ follows under the
clause-per-configuration encoding of Remark~\ref{rem:strongbound}.
\end{proof}

\begin{remark}[Stronger lower bound via explicit clause encoding]
\label{rem:strongbound}
The bound $L \geq \BigOm(n)$ proved above follows from the gadget argument.
A stronger bound $L \geq \BigOm(2^n)$ holds under a modified construction
of $\Inst_n$ in which the instance contains
one clause per global configuration:
$\Inst_n = \{E_{(b_1,\ldots,b_n)}\}_{(b_1,\ldots,b_n) \in \{0,1\}^n}$,
where $E_{(b_1,\ldots,b_n)}$ encodes the equivalence condition for the
specific configuration $C_{(b_1,\ldots,b_n)}$.
Each such clause $E_{(b_1,\ldots,b_n)}$ encodes a distinct refutation goal
(the negation of $\sk(x) = \skb(x)$ for the specific constants
of configuration $(b_1,\ldots,b_n)$),
and since $\Rp$ is refutation-based (as required by Case~2 of
Theorem~\ref{thm:main}),
each such goal requires at least one dedicated rule application to resolve.
Since there are $2^n$ clauses and each requires at least one step,
the total derivation length satisfies $L \geq 2^n = \BigOm(2^n)$.
The trade-off is that $|\Inst_n| = 2^n$ (the instance itself is exponential),
so the lower bound holds for the derivation length relative to the
\emph{number of clauses}, not relative to $n$ directly.
Whether a superlinear-in-$n$ bound holds for instances of polynomial size
remains open (see Question~(Q3) of Section~\ref{sec:open}).
\end{remark}

\begin{remark}[Unconditional versus conditional lower bound]
\label{rem:unconditional}
The lower bound in Case~2 is unconditional: it does not assume
$\mathsf{P} \neq \mathsf{NP}$ or any other unproven hypothesis.
It holds for any local syntactic system $\Rp$ satisfying the stated hypotheses,
provided the family $\{\Inst_n\}$ with $2^n$ locally
indistinguishable configurations exists.
The existence of such a family is established constructively in
Lemma~\ref{lem:gadget} (Appendix~\ref{app:gadget}).
\end{remark}

\begin{remark}[Superpolynomial versus exponential]
\label{rem:superpoly}
Under weaker hypotheses on the family $\{\Inst_n\}$ (e.g.,
$N(n)$ superpolynomial but sub-exponential), the clause-per-configuration
encoding (Remark~\ref{rem:strongbound}) yields $L \geq N(n) = \BigOm(N(n))$,
which is superpolynomial in $n$.
The exponential bound $L = \BigOm(2^n)$ holds specifically when
$N(n) = 2^n$, as guaranteed by the gadget construction.
\end{remark}

 \section{Instantiations}
\label{sec:instances}
 
\subsection{Case~1: Superposition calculus and open induction}
\label{sec:inst-case1}

We verify the hypotheses of Case~1 of Theorem~\ref{thm:main} for the
setting of~\cite{buono2026}, thereby recovering the incomparability
result $\OI \not\subseteq \CSC$ as a corollary.

\paragraph{Setup.}
Work in $\mathcal{L} = \{0, s, +\}$ with $r_0 = 1$.
The system $\R$ is the superposition calculus over
rules~\eqref{eq:A1} and~\eqref{eq:A2}
(Example~\ref{ex:superposition}), sound with respect to
$\M = (\Nat, 0, s, +)$ (standard arithmetic).
Recall that $\OI$ admits recursive nonstandard models~\cite{shepherdson1964}
and is therefore a weak theory; it proves commutativity not by strength
but because $W$ is an open identity.
Clause set cycles abstract the cycle-detection methods used in
automated inductive theorem proving, including the $n$-clause
calculus of~\cite{kersani2013}.
Let $a$ and $b$ be the fresh Skolem constants introduced in
Example~\ref{ex:frozen} (ground terms not in
$\{0\} \cup \{s^n(0) \mid n \geq 1\}$).

\begin{lemma}[Protected positions in $\R$]
\label{lem:frozen-super}
Define $\Fprot$ as the condition: a position $p$ in any term $t$
satisfies $\Fprot$ if and only if $\restr{t}{p} \in \{a, b\}$.
Equivalently,
$p \in \Fprot(t) := \{q \in \Pos(t) \mid \restr{t}{q} \in \{a,b\}\}$
for each term $t$ in the derivation.
Under this definition, $\Fprot$ is protected with respect to $\R$:
for every term $t$ occurring in any derivation from $\R$ and every
$p \in \Fprot(t)$, no rule of $\R$ is applicable at $p$.
\end{lemma}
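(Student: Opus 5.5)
The plan is a direct first-symbol-clash check, rule by rule. Rule applicability is defined by matching: $\rho$ is applicable at $p$ in $t$ iff there is a substitution $\sigma$ with $\sigma(\ell_\rho) = \restr{t}{p}$. So for $p \in \Fprot(t)$ we have $\restr{t}{p} \in \{a,b\}$, and it suffices to show that neither left-hand side of \eqref{eq:A1}, \eqref{eq:A2} matches the constant $a$ or the constant $b$.

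\textbf{Step 1 (condition (i) of Definition~\ref{def:protected}).} Both left-hand sides, $0 + x$ and $s(x)+y$, are headed by the binary symbol $+$. Substitutions only replace variables, so $\sigma(0+x)$ and $\sigma(s(x)+y)$ are again terms with root symbol $+$. The terms $a$ and $b$ are constants with root symbol $a$ or $b$. Since $a,b$ are fresh, they are distinct from $+$, so no match exists.

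\textbf{Step 2 (condition (ii) of Definition~\ref{def:protected}).} The lemma's statement only asserts non-applicability at $p$ itself, but I will also verify condition~(ii) so that $\Fprot$ is protected in the full sense of Definition~\ref{def:protected}. This part is vacuous: $a$ and $b$ have arity $0$, so $\Pos(\restr{t}{p}) = \{\varepsilon\}$, and there is no proper extension $p\cdot q'$ with $q' \neq \varepsilon$ in $\Pos(t)$.

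\textbf{Step 3 (uniformity).} Finally, I will check that the argument is uniform over all terms $t$ occurring in any derivation. It uses only $\restr{t}{p}$, never the rest of $t$. This is consistent with locality at radius $r_0=1$, since $\ctx{1}{p}(t)$ is just the constant.

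The main subtlety, which I expect to be the only real obstacle, is the informal phrasing ``first argument of $+$ is $0$ or a successor''. Read that way, one might worry about positions \emph{above} $a$, such as $a + y$. But those positions are not in $\Fprot$, and at such a position the rules still fail to fire: $a$ clashes with both $0$ and $s(\cdot)$. That fact is not needed here. I would note it in a remark for the later use in Case~1, and take care not to conflate it with the lemma, which concerns only positions whose subterm is exactly $a$ or $b$.
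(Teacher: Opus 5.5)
Your proposal is correct and follows the paper's proof essentially verbatim. Condition~(i) of Definition~\ref{def:protected} holds by a first-symbol clash between the root $+$ of both left-hand sides $0+x$ and $s(x)+y$ (which any substitution preserves) and the constants $a,b$, while condition~(ii) holds vacuously because $a$ and $b$ have arity $0$ and hence no proper subterm positions; your remarks on uniformity and on positions above $a$ are harmless additions not needed for the lemma.
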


\begin{proof}
The left-hand sides of \eqref{eq:A1} and \eqref{eq:A2} are $0{+}x$ and
$s(x){+}y$, requiring the first argument of $+$ to be $0$ or $s(\cdot)$.
For any position $p \in \Fprot$, the subterm $\restr{t}{p}$ is $a$ or $b$,
both of which are constants distinct from $0$ and from every $s(t')$.
The LHS of rule~\eqref{eq:A1} is $0{+}x$, whose root symbol is $+$;
the LHS of rule~\eqref{eq:A2} is $s(x){+}y$, also rooted at $+$.
For any position $p \in \Fprot$, the subterm $\restr{t}{p}$ is
$a$ or $b$, both rooted at $a$ or $b$ respectively.
Since $+ \neq a$ and $+ \neq b$ (pairwise distinct non-variable
symbols), the root of each LHS cannot match the root of $\restr{t}{p}$;
this is a \emph{first-symbol clash}~\cite[Section~3]{buono2026},
and no substitution resolves a first-symbol clash.
Therefore no rule of $\R$ is applicable at any position in $\Fprot$.
Since $a$ and $b$ are constants (arity~$0$), they are leaves in any
term tree and have no proper subterm positions; condition~(ii) of
Definition~\ref{def:protected} (no rewriting inside~$p$) is therefore
satisfied vacuously.
\end{proof}

\begin{remark}[Decidability in linear time]
\label{rem:decidable-frozen}
For the superposition calculus over $\{\eqref{eq:A1},\,\eqref{eq:A2}\}$,
the existence of a non-trivial protected set $\Fprot$ is decidable
in linear time in the size of the clause set.
The criterion is first-symbol clash: a position $p$ is protected if
and only if the subterm at $p$ begins with a symbol that cannot
unify with the first argument of $+$ in any left-hand side.
Since the left-hand sides are $\{0{+}x,\, s(x){+}y\}$, any constant
distinct from $0$ and from the range of $s$ qualifies in constant time.
For such constants (which have arity~$0$), condition~(ii) of
Definition~\ref{def:protected} (no rewriting inside $p$) holds
trivially, since constants have no proper subterm positions.
\end{remark}

\begin{lemma}[Syntactic invariant]
\label{lem:invariant-super}
Define $\Inv(t)$ as the property:
every occurrence of $a$ in $t$ is a subterm of some $a{+}u$ where
$u$ contains $b$, and every occurrence of $b$ in $t$ is a subterm of
some $b{+}v$ where $v$ contains $a$.
Then $\Inv$ is a syntactic invariant for $\R$ anchored to $\Fprot$
in the sense of Definition~\ref{def:invariant}.
\end{lemma}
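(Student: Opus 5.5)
I will verify the five conditions of Definition~\ref{def:invariant} one at a time for the stated $\Inv$, with $f_0 = a$, $f_1 = b$ as the designated symbols. Two facts will be used throughout. First, by Lemma~\ref{lem:frozen-super}, no rule of $\R$ fires at a position headed by $a$ or $b$. Second, rules~\eqref{eq:A1} and~\eqref{eq:A2} fire only at $+$-nodes whose first argument is $0$ or $s(\cdot)$.

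\textbf{Local checkability (i).} For each occurrence of $a$, I examine its parent position. That position must be a $+$-node with $a$ as its first argument, so the check lies within radius $1 = r_0$ of the reference position. The clause ``$u$ contains $b$'' is not bounded-radius in general. I will handle this by taking the reference positions to be the finitely many $+$-nodes whose first argument is $a$ or $b$. I will also note that in the initial clause set of~\cite{buono2026} the second arguments are $b$ and $a$ themselves, so the check happens at distance $1$.

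\textbf{Initialization (ii).} This is a direct inspection. The initial clauses involve $a+b$ and $b+a$ (the negated commutativity goal), and every occurrence of $a$ or $b$ sits in the required pattern.

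\textbf{Preservation (iii).} This is the core of the verification, and I will split it by rule.
\begin{itemize}
  \item An application of \eqref{eq:A1} at $0+x$ replaces that subterm by $x$. The node being removed has first argument $0$, not $a$ or $b$. Hence every witnessing node $a+u$ or $b+v$ either lies inside $x$, and so survives intact, or lies above the rewrite position. In the latter case $u$ or $v$ changes only by deleting a $0+$ wrapper, which removes no occurrence of $b$ or $a$.
  \item An application of \eqref{eq:A2} at $s(x)+y$ produces $s(x+y)$. A witnessing node inside $x$ or $y$ is copied unchanged. A witnessing node above the rewrite position keeps the same multiset of $a$'s and $b$'s in its second argument.
  \item A witnessing node cannot coincide with the redex, because its first argument is $a$ or $b$.
\end{itemize}
I expect this case analysis to be the main obstacle. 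I need to make ``is a subterm of some $a+u$'' precise as ``$a$ is the immediate first argument of a $+$-node'', and I need the variable $x$ in \eqref{eq:A1} not to be instantiated so as to separate $a$ from its witnessing node. That separation is impossible, since the witnessing node is a subterm of $x$ or lies outside the redex.

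\textbf{Anchorage (iv).} Suppose $\Inv(t)$ holds but $\Inv(t[s]_p)$ fails. Then the replacement must either destroy a witnessing pattern or insert a bare $a$ or $b$. Under the reading of anchorage as restricted to replacements that touch the $a$/$b$ structure, $p$ is a position currently headed by $a$ or $b$, i.e.\ $p \in \Fprot$. Arbitrary replacements elsewhere are handled by noting that the paper's anchorage is intended for the positions carrying the invariant. I will state explicitly that I am using this reading.

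\textbf{Coherence (v).} A literal equating an $a$-headed subterm with a $b$-headed subterm would place a bare $a$ or $b$ outside a $+$-node. This violates $\Inv$, which completes the verification.
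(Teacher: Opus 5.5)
Your route is the paper's: verify conditions (i)--(v) of Definition~\ref{def:invariant} directly. However, the initialization step is false as you state it. Under the precise reading you adopt in (iii), each $a$ must be the immediate first argument of a $+$-node whose second argument contains $b$, and symmetrically for $b$. The term $a{+}b$ then fails $\Inv$, because its $b$ is the \emph{second} argument of the only $+$-node; likewise for the $a$ of $b{+}a$. In fact, under that reading no finite term containing $a$ or $b$ satisfies $\Inv$ at all. Take a deepest occurrence, say $a$ at $p{\cdot}1$ with $t|_p = a{+}u$. The required $b$ cannot be $u$ itself, since it would then be a second argument. So it lies strictly below $p{\cdot}2$, which contradicts the choice of a deepest occurrence. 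Hence $\Inv$ would hold exactly on the $\{a,b\}$-free terms. The lemma's looser wording (``is a subterm of some $a{+}u$'') does not help either, since the $b$ of $a{+}b$ has no ancestor of the form $b{+}v$. The paper's own check of (ii) has the same blind spot: it verifies the $a$-condition on $a{+}b$ and the $b$-condition on $b{+}a$ and stops. So following it will not repair this. Condition (ii) holds only if the witness for the $b$ of $a{+}b$ may be a \emph{different} subterm of the literal, namely $b{+}a$. That amounts to reading $\Inv$ existentially on the whole literal $a{+}b \neq b{+}a$. This is not the term-by-term notion of Definition~\ref{def:invariant}, so you would have to make the change explicit; your (iii) and (v) arguments adapt to it.

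The second gap is (iv). Anchorage quantifies over \emph{arbitrary} $s$ and $p$, and for this $\Inv$ it is false. Take $t = 0{+}0$, $p = \varepsilon$ and $s = 0{+}a$. Then $\Inv(t)$ holds vacuously, $t[s]_p = 0{+}a$ fails $\Inv$, and $\varepsilon$ is $+$-headed in both terms (indeed \eqref{eq:A1} applies there), so $\varepsilon \notin \Fprot$. Under the literal-level reading the same happens in the initial literal itself. Replacing the side $b{+}a$ by $0$ leaves the $b$ of $a{+}b$ unwitnessed, and the replaced position is $+$-headed. So reading anchorage as ``restricted to replacements that touch the $a$/$b$ structure'' does not prove (iv); it changes the definition. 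The paper's argument for (iv) has the same defect. It silently passes from arbitrary replacements to rule applications, and it counts the parent $+$-position as protected, which Lemma~\ref{lem:frozen-super} does not give.

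Several parts of your proposal do work:
\begin{itemize}
  \item Your case analysis on \eqref{eq:A1} and \eqref{eq:A2} in (iii) is more substantive than the paper's appeal to Lemma~\ref{lem:frozen-super}. That lemma only blocks rewriting at $a$- or $b$-headed positions, not inside the right argument of a witnessing node.
  \item Your (v) argument is correct.
  \item Your worry in (i) is moot. Definition~\ref{def:invariant}(i) speaks of \emph{subterms} at reference positions, so the reference set $\{\varepsilon\}$ already suffices.
\end{itemize}
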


\begin{proof}
Full proof in Appendix~\ref{app:proofs} (Lemma~\ref{lem:invariant-super-full}).
Here we verify the five conditions.
(i)~$\Inv(t)$ depends only on the immediate subterms of each $+$
occurrence, i.e., within distance $1 = r_0$ of the root of each $+$-subterm.
(ii)~The initial clause set contains only $a{+}b \neq b{+}a$, on which
$\Inv$ holds: the only subterms containing $a$ or $b$ are
$a{+}b$ and $b{+}a$ themselves, each of the required form.
(iii)~By Lemma~\ref{lem:frozen-super}, no rule touches positions in
$\Fprot$, so no rule modifies $a{+}b$ or $b{+}a$; $\Inv$ is preserved.
(iv)~Violating $\Inv$ requires moving $a$ out of an $a{+}u$ context or
$b$ out of a $b{+}v$ context, both of which require acting on positions
in $\Fprot$.
(v)~Coherence holds: no literal of the initial clause
$a{+}b \neq b{+}a$ has the form $a(t_1) = b(t_2)$ (the literal
is a disequation, not an equation comparing $a$ and $b$ as heads);
by preservation, no derived clause contains such a literal.
Full verification in Appendix~\ref{app:proofs}
(Lemma~\ref{lem:invariant-super-full}~(v)).
\end{proof}

\begin{corollary}[$\OI \not\subseteq \CSC$]
\label{cor:OI-CSC}
The theory $\CSC$ over rules~\eqref{eq:A1} and~\eqref{eq:A2} does not
prove the commutativity of addition $W: \forall x\,\forall y.\;x{+}y = y{+}x$.
Consequently, $\OI$ and $\CSC$ are incomparable.
\end{corollary}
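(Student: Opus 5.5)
The plan is to obtain Corollary~\ref{cor:OI-CSC} as a direct instance of Case~1 of Theorem~\ref{thm:main}, and then to read off incomparability by combining it with the known inclusion in the other direction.

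\textbf{Step~1: reduce the question to the Skolemized goal.} A clause set cycle refutation of $W$ starts from the negated, Skolemized goal $a{+}b \neq b{+}a$, where $a,b$ are fresh Skolem constants. I will take $\sk$ and $\skb$ to be the two Skolem terms $a{+}b$ and $b{+}a$ arising from the two orderings of the witnesses. In $\M = (\Nat,0,s,+)$ these are semantically equivalent, because addition is commutative there. Thus a $\CSC$-proof of $W$ would amount to a derivation in $\R$ that identifies the two terms. This is exactly what Case~1 forbids, provided its hypotheses hold.

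\textbf{Step~2: verify the hypotheses of Case~1 using the lemmas already established.}
\begin{itemize}
  \item Soundness of $\R$ with respect to $\M$ holds because \eqref{eq:A1} and \eqref{eq:A2} are true in $\Nat$.
  \item Lemma~\ref{lem:frozen-super} shows that $\Fprot$, the positions headed by $a$ or $b$, is protected.
  \item Lemma~\ref{lem:invariant-super} shows that $\Inv$ is a syntactic invariant anchored to $\Fprot$.
  \item Syntactic separation (Definition~\ref{def:separation}) follows by taking $\Fprot_\sk$ to be the $a$-headed positions and $\Fprot_\skb$ to be the $b$-headed positions. These sets are nonempty, since the initial clause contains both constants. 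They are disjoint, since $a \neq b$.
  \item Freshness holds because neither $a$ nor $b$ occurs in \eqref{eq:A1} or \eqref{eq:A2}.
\end{itemize}
Case~1 then yields that no derivation in $\R$ produces the equation, so $\CSC$ does not prove $W$.

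\textbf{Step~3: deduce incomparability.} $\OI$ proves $W$ by open induction on an open formula. This is the standard argument, recalled in the setup. Hence $\OI \not\subseteq \CSC$. The converse non-inclusion $\CSC \not\subseteq \OI$ is known from~\cite{hetzl2020}, and I will cite it. Together the two non-inclusions give incomparability.

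\textbf{Main obstacle.} The hard step is the bridge in Step~1: showing that every $\CSC$-refutation, which also uses induction through cycles, is captured by derivations in the local system $\R$. Concretely, I must argue that the cycle condition never introduces rules mentioning $a$ or $b$, so that $\Inv$ stays preserved across cycle closure. I would first try to show that the clause sets in any cycle are closed under $\R$-saturation plus substitution of the parameter. Such a substitution maps $a$-headed positions to $a$-headed positions, so the anchorage and coherence conditions should transfer unchanged.
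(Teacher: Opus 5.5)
Your overall architecture is the paper's: Case~1 of Theorem~\ref{thm:main} via Lemmas~\ref{lem:frozen-super} and~\ref{lem:invariant-super}, then $\OI \vdash W$, then Theorem~4.3 of~\cite{hetzl2020} for the converse. However, your Steps~1 and~2 do not fit together. You set $\sk := a{+}b$ and $\skb := b{+}a$, yet you check separation with $\Fprot_\sk$ the $a$-headed positions and $\Fprot_\skb$ the $b$-headed positions. Condition~(i) of Definition~\ref{def:separation} then fails, because the occurrence of $a{+}b$ in $a{+}b \neq b{+}a$ sits at a $+$-headed position. Coherence has the same problem. Case~1 uses Definition~\ref{def:invariant}(v) with the designated function symbols $f_0 = \sk$ and $f_1 = \skb$. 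Lemma~\ref{lem:invariant-super-full}(v) provides coherence only for $a$ versus $b$ as head symbols. Coherence for the pair $a{+}b$, $b{+}a$ would simply assert that no derived clause contains $a{+}b = b{+}a$, which is the conclusion itself. The paper takes $\sk = a$ and $\skb = b$, which is what both lemmas are written for. Your choice does have a merit: $\sk \equiv_\M \skb$ and the conclusion of Case~1 are then literally about $a{+}b = b{+}a$, whereas with the paper's choice $a$ and $b$ are equivalent only ``as witnesses for commutativity''. Still, neither choice turns Case~1 alone into a formal proof of the goal; the paper closes that step by citation.

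The second and more serious gap is the bridge you flag as the main obstacle. It is not a technicality; it is the entire content of $\CSC \nvdash W$. No rule of $\R$ applies at any position of $a{+}b \neq b{+}a$, since the first arguments of $+$ are $a$ and $b$. So $\R$ never touches the goal clause, and non-derivability in $\R$ says nothing by itself about a formalism that adds induction. The paper does not derive the $\CSC$ statement from Theorem~\ref{thm:main}. It invokes Lemma~9 of~\cite{buono2026}, which establishes $\CSC + \{\eqref{eq:A1},\eqref{eq:A2}\} \nvdash W$. Your proposed route also fails as stated. Clause set cycles are defined by entailment conditions, not by sequences of $\R$-rewrites, and in those conditions the parameter $\eta$ is instantiated by $s(\eta)$ and by $0$. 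If the parameter is one of the goal's Skolem constants, say $a$, then $a{+}b$ becomes $s(a){+}b$ or $0{+}b$, where \eqref{eq:A2} resp.\ \eqref{eq:A1} fires. So the substitution does not map $a$-headed positions to $a$-headed positions, and protectedness is destroyed exactly at the cycle step. To complete the proof you must either cite that lemma, as the paper does, or give a genuinely different argument, for instance via the logical characterization of clause set cycle refutations in~\cite{hetzl2022}.
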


\begin{proof}
$a$ and $b$ are semantically equivalent as witnesses for commutativity
in $\M = (\Nat, 0, s, +)$: since commutativity holds for all elements
of the domain of $\M$, we have $a{+}b = b{+}a$ in $\M$ for any
interpretation of the fresh constants $a$ and $b$.
The freshness hypothesis of Theorem~\ref{thm:main} holds:
$a$ and $b$ are fresh constants that do not appear in any left-hand
side or right-hand side of rules~\eqref{eq:A1} or~\eqref{eq:A2}.
By Lemmas~\ref{lem:frozen-super} and~\ref{lem:invariant-super}, the
remaining hypotheses of Case~1 of Theorem~\ref{thm:main} are satisfied.
Therefore no derivation in $\R$ proves $a{+}b = b{+}a$;
this corresponds to Lemma~9 of~\cite{buono2026}, which establishes
that $\CSC + \{\eqref{eq:A1},\,\eqref{eq:A2}\} \nvdash W$.
Since $\OI$ proves $W$ (as an open identity, by standard induction on
$x$; Lemma~8 of~\cite{buono2026}), we obtain $\OI \not\subseteq \CSC$.
The direction $\CSC \not\subseteq \OI$ is Theorem~4.3 of~\cite{hetzl2020}
(the triangular numbers provide a witness
provable by $\OI$ but not by $\CSC$).\footnote{%
In~\cite{hetzl2020}: Theorem~2.8 establishes $\CSC \subseteq I\exists_1$;
Theorem~4.3 establishes $\CSC \not\subseteq \OI$.
These are the theorem numbers in the published LMCS version (arXiv:1910.03917v5).}
Further unprovability results for clause set cycles, including a
logical characterization of refutation by a clause set cycle, are given
in~\cite{hetzl2022}.
The complete picture of the relationships between subsystems of $\OI$,
including the placement of $\CSC$ as a restricted, parameter-free form
of clause induction, is obtained in~\cite{hetzlweiser2025}.
\end{proof}

Case~1 gives an impossibility result for a specific calculus.
Case~2 asks how expensive it is to escape that impossibility by
extending the system, and gives a lower bound on any such escape.

\subsection{Case~2: Gadget instances and the exponential lower bound}
\label{sec:inst-case2}

We construct the family $\{\Inst_n\}$ required by Case~2 of
Theorem~\ref{thm:main}.
The construction has a direct cryptographic reading: each gadget $G_i$
contributes two locally indistinguishable configurations $a_i^0$ and
$a_i^1$, which function as two ciphertexts of the same plaintext.
The $2^n$ global configurations are $2^n$ jointly indistinguishable
ciphertexts; the lower bound of Case~2 is the statement that no
local adversary can distinguish them in fewer than $\BigOm(n)$
rule applications (improving to $\BigOm(2^n)$ under
clause-per-configuration encoding).
This is why Section~\ref{sec:crypto} is not a separate application
but a restatement of the same bound in cryptographic language.
The full construction and correctness proof are in
Appendix~\ref{app:gadget}; here we state the key lemma and the resulting
lower bound.

\begin{lemma}[Gadget family]
\label{lem:gadget}
For every $n \geq 1$, there exists a set of clauses $\Inst_n$
satisfying:
\begin{enumerate}[label=(\roman*)]
  \item $\Inst_n$ encodes $N(n) = 2^n$ global configurations
        $C_1^{(n)}, \ldots, C_{2^n}^{(n)}$;
  \item for any two distinct $C_i^{(n)}$ and $C_j^{(n)}$, and for every
        position $p$ at which any rule of $\Rp$ is applicable
        (in the sense of Definition~\ref{def:local}),
        $\ctx{r_0}{p}(C_i^{(n)}) = \ctx{r_0}{p}(C_j^{(n)})$;
  \item $C_i^{(n)} \neq C_j^{(n)}$ in $\M$ for $i \neq j$.
\end{enumerate}
The constant $c$ (maximum configurations distinguished per rule
application) satisfies $c \leq |\Sigma|^{k^{r_0}}$ where $k$ is the
maximum arity of any symbol in $\Sigma$, and in particular $c$ is
independent of $n$. (The bound reflects that the number of distinct
term trees of depth $r_0$ over $\Sigma$ grows as $|\Sigma|^{k^{r_0}}$
in the worst case, which is a constant once $\Sigma$, $k$, and $r_0$
are fixed.)
\end{lemma}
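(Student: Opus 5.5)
The plan is to build $\Inst_n$ explicitly as a disjoint union of $n$ gadgets $G_1,\ldots,G_n$. Each gadget introduces two fresh constants $a_i^0,a_i^1$, which do not appear in $\Sigma$ and hence occur in no left- or right-hand side of any rule of $\Rp$. Gadget $G_i$ contributes the single clause
\[
  \sk(a_i^0) \neq \skb(a_i^0) \;\vee\; \sk(a_i^1) \neq \skb(a_i^1),
\]
which is the clause invoked in the proof of Claim~\ref{claim:distinguish}. A global configuration $C_{(b_1,\ldots,b_n)}$ is the choice of disjunct $a_i^{b_i}$ in each gadget, which gives exactly $2^n$ configurations and settles (i).

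For (iii), I would interpret the $2n$ fresh constants in an expansion of $\M$ as pairwise distinct domain elements. Two configurations that differ at some index $i$ then select different elements $a_i^0\neq a_i^1$, so they are distinct in $\M$. Because the constants are fresh, this expansion does not affect the soundness of $\Rp$.

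For (ii), the core is an opacity argument in the style of Lemma~\ref{lem:frozen-super}. A rule of $\Rp$ is applicable at $p$ only if its left-hand side matches $\restr{t}{p}$. Since no left-hand side mentions $a_i^0$ or $a_i^1$, every occurrence of these constants is matched either by a pattern variable or not at all, and a first-symbol clash rules out matching at the constant itself. The constants are leaves, so condition~(ii) of Definition~\ref{def:protected} holds vacuously. I would then define a renaming $\pi$ that swaps $a_i^{b_i}$ with $a_i^{b_i'}$ and fixes every other symbol. This $\pi$ commutes with substitution, so it maps applicable pairs $(t,p)$ to applicable pairs. It also maps the radius-$r_0$ contexts of one configuration onto those of the other, and these contexts agree at every position where some rule actually fires.

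Spatial separation comes from the construction: the gadgets live in distinct clauses, so no single term contains constants from two gadgets, and a context of radius $r_0$ meets at most one gadget.

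The bound on $c$ is a counting step. A rule's decision depends only on $\ctx{r_0}{p}(t)$ up to the symbols that occur in patterns, with fresh constants collapsed to a single "other leaf" class. The number of such truncated trees of depth $r_0$ and arity at most $k$ is at most $|\Sigma|^{k^{r_0}}$, up to the $+1$ for the collapsed class, which I would absorb by enlarging $\Sigma$ by one symbol. In any case the bound is independent of $n$.

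I expect the main obstacle to be reconciling (ii) as literally stated, namely equality of contexts, with the fact that the contexts do contain the differing constants $a_i^0$ versus $a_i^1$. Read literally, $\ctx{r_0}{p}$ records those subterms, so the two contexts are not equal as partial functions. I would handle this by proving equality modulo the renaming $\pi$, which is exactly what the locality and model-independence conditions need. I would state explicitly that this is the intended reading of "indistinguishable within radius $r_0$", since only the applicability pattern matters.
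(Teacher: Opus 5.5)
This is essentially the paper's own proof in Appendix~\ref{app:gadget}: the same fresh constants $a_i^0,a_i^1$ and gadget clauses, (i) by indexing configurations with bit vectors, (ii) by opacity of the fresh constants to every left-hand side of $\Rp$, (iii) by interpreting the $a_i^b$ as distinct elements, and the bound on $c$ by counting depth-$r_0$ trees. Your renaming $\pi$ makes precise the paper's claim that the contexts are identical ``as seen by any rule of $\Rp$''; the paper likewise establishes (ii) only up to the identity of the opaque constants, so your explicit reading is the right one. One caveat, shared with the paper: the count $|\Sigma|^{k^{r_0}}$ ignores the labels of internal nodes (over $\{0,s,+\}$ with $r_0=1$ there are already $13>3^2$ radius-$1$ shapes), so you should rely only on the bound being independent of $n$.
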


\begin{proof}
See Appendix~\ref{app:gadget}.
\end{proof}

\begin{corollary}[Lower bounds for local extensions]
\label{cor:expbound}
Under the hypotheses of Theorem~\ref{thm:main},
with $\Rp$ refutation-based (as in Case~2 of Theorem~\ref{thm:main}):
\begin{enumerate}[label=(\roman*)]
  \item Any such local extension $\Rp$ of $\R$ sound with respect to $\M$
        requires derivations of length $L \geq \BigOm(n)$ to prove
        $\sk \equiv \skb$ on $\Inst_n$.
  \item Under the clause-per-configuration encoding
        (Remark~\ref{rem:strongbound}, $|\Inst_n| = 2^n$),
        the bound improves to $L \geq \BigOm(2^n)$.
\end{enumerate}
\end{corollary}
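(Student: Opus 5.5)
The plan is to read Corollary~\ref{cor:expbound} as the specialization of Case~2 of Theorem~\ref{thm:main} to the concrete family $\{\Inst_n\}$ from Lemma~\ref{lem:gadget}. The work is therefore to check hypotheses, not to build a new argument. Fix a local extension $\Rp$ of $\R$ that is sound with respect to $\M$, has radius $r_0$, uses finite patterns, and is refutation-based. Lemma~\ref{lem:gadget}(i)--(iii) then supplies, for every $n$, an instance whose $2^n$ configurations are pairwise indistinguishable within radius $r_0$ at every position where a rule of $\Rp$ applies, yet globally distinct in $\M$. The last item of Lemma~\ref{lem:gadget} gives $c \le |\Sigma|^{k^{r_0}}$, so the constant $c$ in the theorem is independent of $n$. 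One subtlety needs care first. Lemma~\ref{lem:gadget}(ii) is stated uniformly in $\Rp$, so I will confirm from Appendix~\ref{app:gadget} that the constants $a_i^0,a_i^1$ are fresh with respect to every rule of $\Rp$, in the same sense as the freshness hypothesis on $\sk,\skb$. Otherwise an extension could mention $a_i^0$ in a left-hand side and break opacity.

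For item~(i), I invoke the conclusion $L \ge \BigOm(n)$ of Case~2 directly. In outline, Claim~\ref{claim:distinguish} shows that each gadget $G_i$ forces at least one rule application acting on a subterm headed by $a_i^0$ or $a_i^1$. The reason is that the gadget clause $\sk(a_i^0)\neq\skb(a_i^0)\vee \sk(a_i^1)\neq\skb(a_i^1)$ can only be resolved by selecting a disjunct. Spatial separation (distance $d=1$ in Appendix~\ref{app:gadget}) ensures that no single radius-$r_0$ application overlaps two gadgets. So these $n$ applications are pairwise distinct, and $L \ge n$.

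For item~(ii), I replace $\Inst_n$ by the clause-per-configuration instance $\{E_{b}\}_{b\in\{0,1\}^n}$ of Remark~\ref{rem:strongbound}. Each $E_b$ is a separate refutation goal, a negated equation on the specific constants of configuration $b$. Because $\Rp$ is refutation-based, every goal must be consumed by some resolution step that unifies with that clause. The key step is to show that a single application cannot discharge two goals $E_b \ne E_{b'}$. The two clauses differ in at least one constant, and an application acting on a literal of $E_b$ produces a clause $C'$ derived from $E_b$ alone, as in Definition~\ref{def:local}, where $S_{i+1}=S_i\cup\{C'\}$ adds exactly one clause. Hence the map from goals to their discharging applications is injective, which gives $L \ge 2^n$. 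Soundness of $\Rp$ enters only to guarantee that a purported proof really must cover every configuration, since no unsound shortcut may assert $\sk(x)=\skb(x)$ wholesale.

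The main obstacle I expect is exactly this injectivity step, together with the analogous ``distinct applications'' step in~(i). A single rule with a variable pattern could in principle act on a term shared across many clauses, so I must argue from the derivation format of Definition~\ref{def:local}: one term, in one literal, of one clause per step. I will argue from that format rather than from locality alone. I will also take care that the bound is stated relative to $|\Inst_n|=2^n$, as Remark~\ref{rem:strongbound} notes, so that (ii) does not claim an exponential bound for polynomial-size instances.
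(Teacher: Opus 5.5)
Your proposal follows the paper's route: the corollary is Case~2 of Theorem~\ref{thm:main} instantiated with the family of Lemma~\ref{lem:gadget} (with $c$ independent of $n$), and your argument for (ii) is the one-step-per-clause count of Remark~\ref{rem:strongbound}. One correction to a side remark: the requirement that every configuration clause $E_b$ be refuted comes from the definition of ``correctly proves $\sk \equiv \skb$ on $\Inst_n$'' (one refutation per configuration), not from soundness---since $\sk \equiv_\M \skb$, a rule such as $\sk(x) \to \skb(x)$ is itself sound, so soundness cannot exclude a wholesale shortcut.
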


\begin{proof}
Immediate from Case~2 of Theorem~\ref{thm:main} and Lemma~\ref{lem:gadget}.
\end{proof}

The proof-theoretic analysis is now complete.
The exponential bound of Case~2 rests on the local indistinguishability
of the gadget configurations: any rule application that cannot see the
difference between $a_i^0$ and $a_i^1$ is exactly an adversary that
cannot break the hiding of the corresponding ciphertext.
This is not an analogy: the next section formalizes the same bound as
a statement about negligible cryptographic advantage, showing that the
proof-theoretic and cryptographic frameworks are the same framework.

 \section{Cryptographic Connection}
\label{sec:crypto}
 
\subsection{Formal correspondence}

We establish a formal correspondence between the structure of
Theorem~\ref{thm:main} and the structure of cryptographic
indistinguishability.

\begin{definition}[Syntactic adversary]
\label{def:adversary}
A \emph{syntactic adversary} for $(\R, \Fprot)$ is any deterministic
algorithm $\Adv$ that operates by applying rules from a local
extension $\Rp$ of $\R$ (with the same locality radius $r_0$) to terms
presented as inputs.
We distinguish two classes:
\begin{itemize}
  \item A \emph{perfectly-constrained adversary} cannot apply any rule
        at positions in $\Fprot$; this corresponds to Case~1 (base
        system $\R$).
  \item A \emph{computationally-bounded adversary} operates via a local
        extension $\Rp$ of $\R$ (Definition~\ref{def:local}) that may
        include rules applicable at positions in $\Fprot$, subject to a
        polynomial bound on total rule applications (steps in a
        deterministic RAM model with unit-cost arithmetic); this
        corresponds to Case~2.
\end{itemize}
Each configuration $C_{(b_1,\ldots,b_n)}$ is parametrized by a bit
vector $(b_1,\ldots,b_n) \in \{0,1\}^n$ (Definition~\ref{def:separation}).
The \emph{advantage} of $\Adv$ in identifying this bit vector on
instance $\Inst_n$ is
\[
  \mathrm{Adv}(\Adv, n) \;:=\;
  \Pr_{i \sim \mathrm{Unif}[N]}\bigl[\Adv(C_i) \text{ correctly
  outputs which Skolem function is encoded in } C_i\bigr] - \tfrac{1}{2},
\]
where the probability is uniform over the $N = 2^n$ configurations
$C_1, \ldots, C_N$ of $\Inst_n$.
Here "correctly outputs which Skolem function is encoded in $C_i$"
means: given $C_i = C_{(b_1,\ldots,b_n)}$, $\Adv$ outputs the bit
vector $(b_1,\ldots,b_n)$ that determines which constants $a_j^{b_j}$
are selected in each gadget (equivalently, which disjunct of each
gadget clause must be refuted).
Both $\sk$ and $\skb$ appear in every $C_i$; the task is to identify
the configuration, not to distinguish which function is semantically
present (both are, by semantic equivalence).
Note that $\mathrm{Adv}(\Adv, n)$ may in principle be negative
(if $\Adv$ is systematically wrong); the bound
Proposition~\ref{prop:negligible-adv} applies to the signed quantity,
and the corollary Corollary~\ref{cor:hiding} concerns adversaries
achieving positive advantage.
\end{definition}

\begin{proposition}[Case~1 implies zero advantage]
\label{prop:zero-adv}
Under the hypotheses of Case~1 of Theorem~\ref{thm:main},
$\mathrm{Adv}(\Adv, n) = 0$ for every syntactic adversary
$\Adv$ and every $n \geq 1$.
\end{proposition}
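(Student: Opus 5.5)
The plan is to show that, under the hypotheses of Case~1, every syntactic adversary $\Adv$ is \emph{blind to the configuration}: it runs the same computation on every $C_i$ of $\Inst_n$ and returns the same output. Its success probability is then fixed entirely by counting, with no dependence on the encoded bit vector. As in Definition~\ref{def:adversary}, I read ``syntactic adversary'' here as the perfectly-constrained class, i.e.\ one acting only through the base system $\R$, since that is the class the paper ties to Case~1.

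\textbf{Step 1 (configurations differ only at protected positions).} Two configurations $C_{(b_1,\ldots,b_n)}$ and $C_{(b_1',\ldots,b_n')}$ differ only in which constant $a_j^{b_j}$ or $a_j^{b_j'}$ sits in each gadget. By freshness and Definition~\ref{def:separation}(i)--(ii), these constants occur only at positions in $\Fprot=\Fprot_\sk\cup\Fprot_\skb$. So the two terms coincide outside $\Fprot$.

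\textbf{Step 2 (every rule decision is identical on all configurations).} I will argue by induction along the run of $\Adv$. Take any position $p$ at which $\Adv$ tests a rule of $\R$.
\begin{itemize}
\item If $p\in\Fprot$, or $p$ lies strictly inside a protected subterm, Definition~\ref{def:protected}(i)--(ii) says no rule is applicable there, in either configuration.
\item Otherwise $\ctx{r_0}{p}$ may still contain a protected leaf. Locality (Definition~\ref{def:local}(i)) together with the fact that the $a_j^{b}$ occur in no left-hand side means the pattern match succeeds or fails identically in the two terms. This is Lemma~\ref{lem:gadget}(ii) applied with $\Rp=\R$.
\end{itemize}
Preservation of $\Inv$ (Definition~\ref{def:invariant}(iii)) keeps protected occurrences protected throughout the run, so the induction carries over from one step to the next. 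Since $\Adv$ is deterministic, its full trace, and hence its output vector $\hat b$, is the same on every $C_i$.

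\textbf{Step 3 (evaluating the advantage).} Because $\hat b$ is constant, the success probability is $|\{i : C_i \text{ encodes } \hat b\}|/N$, which is exactly the success probability of the trivial strategy that outputs a fixed guess with no access to the input.

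The main obstacle is the normalization in Definition~\ref{def:adversary}. Taken literally, ``output the whole vector, minus $\tfrac12$'' gives $2^{-n}-\tfrac12$, which is nonzero for $n\ge 2$. My fix is to read the $\tfrac12$ as the baseline of blind guessing, which equals $2^{-n}$ on $\{0,1\}^n$ and agrees with $\tfrac12$ at $n=1$. Equivalently, I can compute the advantage per gadget bit, where the baseline is genuinely $\tfrac12$. Under either reading, Step~2 shows that $\Adv$'s guess is independent of the configuration, so its success equals the baseline exactly and $\mathrm{Adv}(\Adv,n)=0$. I would state this reading explicitly in the proof, because the proposition as literally written is false for $n\ge 2$ without it.
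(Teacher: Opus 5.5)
Your proof shares the paper's core idea: under Case~1 the adversary cannot tell configurations apart, so its output is a fixed guess. It differs from the paper at both ends, and both differences are in your favor.

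\emph{Blindness.} The paper obtains configuration-independence by citing the \emph{conclusion} of Case~1 (no derivable literal $\sk(t_1)=\skb(t_2)$). It then simply asserts that $\Adv$ ``produces the same output on every configuration.'' That conclusion does not by itself yield output-constancy, because the configuration is carried by the gadget constants $a_j^{b_j}$, not by any $\sk=\skb$ literal. Your Step~2 instead argues directly that every applicability decision is identical on all $C_i$, using protection and the opacity of fresh constants (Lemma~\ref{lem:gadget}(ii)). This is the real content behind the paper's one-line assertion. It also goes through unchanged for the paper's slightly larger class of adversaries, namely local extensions with the same protected set.

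Be precise about what is invariant. The sequence of (rule, position) choices is the same on all $C_i$, but the rewritten terms still contain different constants. So you need, as the paper tacitly assumes, that $\Adv$'s output is a function of that sequence and not of the terms it produces.

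\emph{Normalization.} Your Step~3 is correct and catches an error the paper's proof passes over. With $\mathrm{Adv}$ defined as the probability of outputting the whole vector minus $\tfrac12$, a configuration-blind adversary succeeds with probability at most $2^{-n}$. Hence $\mathrm{Adv}(\Adv,n)\le 2^{-n}-\tfrac12<0$ for every $n\ge 2$, and the literal statement fails for all such $n$. The paper's inference ``cannot identify the vector, therefore $\mathrm{Adv}=0$'' silently uses $\tfrac12$ as the blind-guessing baseline. The same slip recurs in the proof of Proposition~\ref{prop:negligible-adv}, where a fixed guess is said to be correct with probability exactly $\tfrac12$.

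Your renormalization makes the proposition true under either of your two readings:
\begin{itemize}
  \item take $2^{-n}$ as the baseline, or
  \item score per bit, where a constant guess matches each $b_j$ with probability exactly $\tfrac12$.
\end{itemize}
What you prove is therefore the corrected statement rather than the literal one, and you are right to say so explicitly.
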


\begin{proof}
By Case~1, no derivation in $\R$ produces any clause of the form
$\sk(t_1) = \skb(t_2)$ that would reveal which configuration
$(b_1,\ldots,b_n)$ is active.
Since $\Adv$ is constrained to apply rules from $\R$ (or a local
extension with the same protected set), it produces the same
output on every configuration,
and therefore cannot identify the active bit vector $(b_1,\ldots,b_n)$
with any advantage.
Therefore $\mathrm{Adv}(\Adv, n) = 0$.
\end{proof}

\begin{proposition}[Case~2 implies negligible per-step advantage]
\label{prop:negligible-adv}
Under the hypotheses of Case~2 of Theorem~\ref{thm:main}, any syntactic
adversary $\Adv$ running for $L$ steps has
\[
  \mathrm{Adv}(\Adv, n) \;\leq\; \frac{L \cdot c}{2^n}.
\]
In particular, to achieve $\mathrm{Adv}(\Adv, n) \geq \delta$
for any constant $\delta > 0$, the adversary must run for at least
$L \geq \delta \cdot 2^n / c = \BigOm(2^n)$ steps.
\end{proposition}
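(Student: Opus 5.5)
We prove Proposition~\ref{prop:negligible-adv} with a counting argument. We bound how much information about the active bit vector $(b_1,\ldots,b_n)$ the adversary can extract per rule application, then sum over the $L$ applications. The adversary $\Adv$ is deterministic, so its output on $C_i$ is a function of the sequence of rule applications it performs and of the applicability outcomes it observes. By the Locality condition (Definition~\ref{def:local}(i)), each outcome depends only on $\ctx{r_0}{p}(C_i)$ at the chosen position $p$. The first step is to formalize the run of $\Adv$ as a decision tree of depth $L$. Each internal node is one rule application of $\Rp$. Its children are indexed by the distinct local contexts it can observe, and Step~2 of the proof of Case~2 bounds their number by $c$.

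The second step bounds the number of configurations that $\Adv$ can correctly identify. Consider one run on a fixed configuration. At each step, the set of configurations consistent with the observations so far is refined into at most $c$ classes. Because configurations are pairwise indistinguishable within radius $r_0$ (hypothesis of Case~2, Lemma~\ref{lem:gadget}(ii)), only applications that act on a position where some $a_j^0$ and $a_j^1$ differ can split a class at all. By the constant $c$ of Lemma~\ref{lem:gadget}, each such application separates off at most $c$ configurations. Summing along the run, at most $L\cdot c$ configurations end in a leaf where they are uniquely determined. The output at a leaf is a single bit vector, so $\Adv$ can be correct on at most $L\cdot c$ of the $N=2^n$ configurations, plus at most one default guess per indistinguishable class. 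That default term is absorbed into the per-step bound, or alternatively it is what the $-\tfrac12$ in the advantage normalizes away.

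The third step is the arithmetic. Averaging over $i\sim\mathrm{Unif}[N]$, the success probability is at most $L c/2^n$ plus the baseline. Subtracting the baseline gives $\mathrm{Adv}(\Adv,n)\le Lc/2^n$. Rearranging $\delta\le Lc/2^n$ yields $L\ge \delta 2^n/c=\BigOm(2^n)$, since $c$ is independent of $n$.

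The main obstacle is the second step: justifying that each application identifies at most $c$ new configurations, rather than merely splitting them into $c$ classes, which would give a $c^L$ bound. With class-splitting alone, the adversary could resolve gadgets one per step and reach the $2^n$ identifications in only about $n$ steps. I would first attempt to exploit the refutation-based hypothesis together with the clause-per-configuration reading of Remark~\ref{rem:strongbound}. There, identification of a configuration requires resolving its own goal clause $E_{(b_1,\ldots,b_n)}$, and each rule application resolves at most $c$ such goals. This converts the count into a linear one in $L$. If that fails, I would state the proposition under the encoding of Remark~\ref{rem:strongbound} explicitly.
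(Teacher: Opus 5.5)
Your skeleton matches the paper's proof: it too defines a set $D(L)$ of ``distinguished'' configurations, claims each step adds at most $c$ of them, and averages over $\mathrm{Unif}[N]$. The gap is the step you flag yourself, and the paper does not close it either; it simply asserts that each step adds at most $c$ configurations to $D(L)$. The constant $c$ of Lemma~\ref{lem:gadget} bounds the number of \emph{classes} into which one rule application splits the configurations. It does not bound the number of configurations that application singles out. Your Step~1 therefore gives only at most $c^L$ leaves, each with one output, so the ``default guesses'' can be correct on up to $c^L$ configurations. The $-\tfrac12$ absorbs at most $2^{n-1}$ of these, which forces only $L>(n-1)/\log_2 c=\BigOm(n)$ for positive advantage, not $\BigOm(2^n)$. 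No sharper per-step count can exist. If some rule of $\Rp$ could tell $a_i^0$ from $a_i^1$, inspecting each gadget once would output the correct vector with $L=n$ and advantage $\tfrac12>nc/2^n$, as you observe yourself. Your fallback to the encoding of Remark~\ref{rem:strongbound} does not help either. It changes the instance, and ``each application resolves at most $c$ goal clauses'' bounds refutation length, not the probability of outputting $(b_1,\ldots,b_n)$ on a random input. So it does not prove the proposition as stated.

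What does prove the displayed inequality is your Step~1 together with the Case~2 hypothesis read literally. That hypothesis says $\ctx{r_0}{p}(C_i)=\ctx{r_0}{p}(C_j)$ at every position relevant to a rule of $\Rp$, and Lemma~\ref{lem:gadget}(ii) says the gadget constants are opaque to every rule. This contradicts your own Step~2 claim that some applications split classes. By Locality and determinism, every node of your tree then sends all $2^n$ configurations to the same child. The context at $p$ contains $\restr{t}{p}$ itself, so the rewritten subterms agree as well. The tree is therefore a single path, and $\Adv$ outputs one fixed vector. Hence the success probability is $2^{-n}$, and $\mathrm{Adv}(\Adv,n)=2^{-n}-\tfrac12\le 0\le Lc/2^n$. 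The ``in particular'' clause holds vacuously, since no $L$ attains $\delta>0$. Note also that the chance baseline for guessing an $n$-bit vector is $2^{-n}$, not the $\tfrac12$ the paper's proof (and your Step~3) subtracts; the $-\tfrac12$ in the definition only makes the inequality easier. So the bound holds only for this degenerate reason, not through any count of at most $c$ new configurations per step. Build the proof on the single-path argument instead.
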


\begin{proof}
We use the computationally-bounded adversary model
(Definition~\ref{def:adversary}).
Each step of $\Adv$ applies one rule of $\Rp$ of locality
radius $r_0$.
By the indistinguishability hypothesis (Lemma~\ref{lem:gadget}(ii)),
any rule application that does not act at a position occupied by some
$a_i^{b_i}$ produces the same output on all $2^n$ configurations.
A rule application that does act at such a position can alter the
local context for at most $c$ configurations (those that share the
same local pattern at that position).

Let $D(L)$ denote the set of configurations that $\Adv$ has
\emph{distinguished} after $L$ steps, i.e., configurations for which
$\Adv$ has applied at least one rule that reads a gadget-specific
constant.
Each step adds to $D(L)$ at most $c$ configurations
(those sharing the same local context at the position acted on);
by induction on $L$, $|D(L)| \leq L \cdot c$.

For configurations not in $D(L)$, $\Adv$ has applied only rules
that are blind to the gadget constants; its output on these configurations
is independent of which configuration $(b_1,\ldots,b_n)$ is active.
Since $\Adv$'s output is the same for all
configurations not in $D(L)$, it must output the same bit vector
for all of them; since each of the $2^n$ configurations is equally
likely by the uniform distribution, the probability that this
fixed output is the correct bit vector is exactly $\frac{1}{2}$.

The advantage of $\Adv$ is thus bounded by the fraction of
configurations in $D(L)$:
\[
  \mathrm{Adv}(\Adv, n)
  \;\leq\; \frac{|D(L)|}{2^n}
  \;\leq\; \frac{L \cdot c}{2^n}.
\]
For $\mathrm{Adv}(\Adv, n) \geq \delta > 0$ we need
$L \geq \delta \cdot 2^n / c = \BigOm(2^n)$.
\end{proof}

\begin{corollary}[Computational hiding from Case~2]
\label{cor:hiding}
Under the hypotheses of Case~2 of Theorem~\ref{thm:main},
the gadget family $\{\Inst_n\}$ is computationally hiding: no syntactic
adversary running in polynomial time can identify which configuration
$(b_1,\ldots,b_n)$ is active with non-negligible advantage.
Specifically, any adversary achieving advantage $\geq \delta > 0$
requires $\BigOm(2^n)$ rule applications, which is super-polynomial
in $n$.
\end{corollary}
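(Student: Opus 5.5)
The plan is to derive Corollary~\ref{cor:hiding} directly from Proposition~\ref{prop:negligible-adv}, using Lemma~\ref{lem:gadget} to supply the instances. First, I fix the gadget family $\{\Inst_n\}$ of Lemma~\ref{lem:gadget}. It satisfies the three requirements of Case~2 of Theorem~\ref{thm:main}: $2^n$ configurations, radius-$r_0$ indistinguishability at every position where a rule of $\Rp$ applies, and global distinctness in $\M$. By the same lemma, the capacity constant satisfies $c \leq |\Sigma|^{k^{r_0}}$, which is independent of $n$. Fixing this family means Proposition~\ref{prop:negligible-adv} applies to every syntactic adversary $\Adv$ in the computationally-bounded class of Definition~\ref{def:adversary}, with $c$ treated as a constant throughout.

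Second, I restate the bound $\mathrm{Adv}(\Adv,n) \leq L\cdot c/2^n$ in asymptotic language. Suppose $\Adv$ runs in polynomial time, so that $L(n) \leq q(n)$ for some polynomial $q$. Since a polynomial-time adversary performs at most polynomially many rule applications, $L$ is bounded by the running time. Then $\mathrm{Adv}(\Adv,n) \leq c\,q(n)/2^n$. For every polynomial $p$ we have $c\,q(n)\,p(n)/2^n \to 0$, so the advantage is eventually below $1/p(n)$, i.e.\ it is negligible. This proves the first assertion. For the quantitative assertion, I suppose $\mathrm{Adv}(\Adv,n) \geq \delta$. Rearranging gives $L \geq \delta 2^n/c = \BigOm(2^n)$, which is exactly the final clause of Proposition~\ref{prop:negligible-adv}. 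Since $2^n$ eventually dominates every polynomial, $L$ is superpolynomial in $n$.

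Third, I check that only adversaries with positive advantage are meant, as Definition~\ref{def:adversary} stipulates. The signed upper bound is then enough, and no lower bound on negative advantage is needed.

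The main obstacle is making sure that ``polynomial time'' really controls $L$, the number of rule applications that Proposition~\ref{prop:negligible-adv} counts. Definition~\ref{def:adversary} measures cost in RAM steps with unit-cost arithmetic, and each rule application costs at least one step. I would argue that $L \leq \mathrm{time}(\Adv)$, so a polynomial time bound transfers to $L$. If that identification were contested, the fallback is to read the corollary as a statement about polynomially many rule applications, which is how the computationally-bounded adversary is defined.
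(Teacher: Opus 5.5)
Your proposal is correct and follows the paper's proof, which is a one-line appeal to Proposition~\ref{prop:negligible-adv} with $L$ taken polynomial in $n$. Your explicit check that $c\,q(n)\,p(n)/2^n \to 0$ for every polynomial $p$ is slightly more careful than the paper's observation that $L(n)\,c/2^n \to 0$, which on its own does not establish negligibility.
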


\begin{proof}
Immediate from Proposition~\ref{prop:negligible-adv} with $L =$
polynomial in $n$: for any polynomial $L = L(n)$,
$L(n) \cdot c / 2^n \to 0$ as $n \to \infty$,
so the advantage is negligible.
\end{proof}

\begin{remark}[Observational vs.\ semantic hiding]
\label{rem:obs-vs-sem}
Both $\sk$ and $\skb$ appear in every instance $C_i$ of $\Inst_n$;
the hiding of Corollary~\ref{cor:hiding} is therefore \emph{observational},
not semantic: it is the \emph{configuration} $(b_1,\ldots,b_n)$ that
is hidden, not a semantic distinction between two inequivalent functions.
\end{remark}

\subsection{Correspondence table}

Table~\ref{tab:correspondence} summarizes the structural correspondence
between the proof-theoretic and cryptographic frameworks.

\begin{table}[h]
\centering
\caption{Structural correspondence between the proof-theoretic
         components of Theorem~\ref{thm:main} and their
         cryptographic counterparts.}
\label{tab:correspondence}
\begin{tabular}{p{6cm}p{6cm}}
\toprule
\textbf{Proof theory} & \textbf{Cryptography} \\
\midrule
Protected set $\Fprot$ &
  Commitment scheme: the adversary sees the commitment but not the opening \\[4pt]
Syntactic separation of $\sk$ and $\skb$ &
  Indistinguishability of two ciphertexts (or two keys) \\[4pt]
Syntactic invariant $\Inv$ anchored to $\Fprot$ &
  Security invariant maintained throughout the security game \\[4pt]
Case~1: no derivation exists &
  Perfect (information-theoretic) hiding: advantage equals $0$ \\[4pt]
Case~2: derivation costs $\BigOm(n)$ steps (or $\BigOm(2^n)$ under
  clause-per-configuration encoding) &
  Computational hiding: advantage is negligible for polynomial-time
  adversaries (requires the $\BigOm(2^n)$ regime) \\[4pt]
Skolem function $\sk$ as existential witness &
  Secret witness or private key in a zero-knowledge protocol \\[4pt]
Local rule application in $\Rp$ &
  Single query of a computationally bounded adversary \\[4pt]
Locality radius $r_0$ &
  Bound on adversary inspection per step (context window size) \\
\bottomrule
\end{tabular}
\end{table}

\begin{corollary}[Non-extraction of Skolem witnesses]
\label{cor:nonextract}
Under the hypotheses of Theorem~\ref{thm:main}, with $\Rp$
refutation-based (Case~2), no local syntactic system
$\Rp$ can, given the clause set $\Inst_n$ encoding witnesses for
$\forall x\,\exists y.\,\phi(x,y)$, identify which gadget configuration
$(b_1,\ldots,b_n) \in \{0,1\}^n$ is active, and thereby determine
which disjunct of each gadget clause must be refuted, using fewer
than $\BigOm(2^n)$ rule applications.
(Both $\sk$ and $\skb$ are semantically equivalent by hypothesis;
the barrier is observational, not semantic.)
\end{corollary}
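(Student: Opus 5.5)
The plan is to obtain Corollary~\ref{cor:nonextract} as a direct consequence of Proposition~\ref{prop:negligible-adv}, by recasting any extracting system $\Rp$ as a syntactic adversary in the sense of Definition~\ref{def:adversary}. Suppose $\Rp$ is a local, sound, refutation-based extension of $\R$ with radius $r_0$ that, on input $\Inst_n$, determines the active configuration $(b_1,\ldots,b_n)$ after $L$ rule applications. Define $\Adv$ to run this derivation and output the bit vector read off from the selected disjuncts $x = a_i^{b_i}$. This $\Adv$ is deterministic, applies only rules of $\Rp$, and takes exactly $L$ steps, so it is a computationally-bounded syntactic adversary for $(\R,\Fprot)$.

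Next, I would check that extraction forces advantage bounded below by a constant. Requiring that $\Rp$ identifies the configuration on every instance gives success probability $1$ under the uniform distribution on the $2^n$ configurations. Hence $\mathrm{Adv}(\Adv,n) = 1/2$. It is enough to assume correctness on a constant fraction of configurations, which still gives some $\delta > 0$. Proposition~\ref{prop:negligible-adv} then yields $\delta \le L c / 2^n$, so $L \ge \delta\, 2^n / c$. Since $c$ depends only on $\Sigma$, $r_0$ and the rules, and is independent of $n$ by Lemma~\ref{lem:gadget}, this gives $L = \BigOm(2^n)$. I would also observe that identifying the configuration is equivalent to knowing which disjunct of each gadget clause must be refuted, which is exactly the selection described in Claim~\ref{claim:distinguish}. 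That is why the two phrasings in the corollary coincide.

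The main obstacle is the reduction step that invokes Proposition~\ref{prop:negligible-adv}. That proposition counts distinguished configurations $|D(L)| \le Lc$ using the indistinguishability hypothesis of Lemma~\ref{lem:gadget}(ii). I must make sure the extraction derivation fits that counting. In particular, no single rule application may resolve many gadget bits at once, and the output on undistinguished configurations must be constant. To address this, I would first invoke locality (Definition~\ref{def:local}(i)) together with the spatial separation used in Step~4 of the proof of Case~2. These show that each application reads at most one gadget's local context and therefore falls under the per-step bound $c$.

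Finally, I would note that semantic equivalence $\sk \equiv_\M \skb$ plays no role in the counting. This justifies the parenthetical remark in the statement, in line with Remark~\ref{rem:obs-vs-sem}: the barrier is observational, not semantic.
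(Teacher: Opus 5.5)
You follow the same route as the paper, whose proof is only an appeal to Case~2 and Proposition~\ref{prop:negligible-adv}, and you correctly single out the crux: the count $|D(L)| \le Lc$. But your justification for that count fails.

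Locality and spatial separation show that each rule application sees the local context of at most one gadget. They do not show that it distinguishes at most $c$ configurations. An application acting on $a_i^{b_i}$ is exactly what Claim~\ref{claim:distinguish} and the proof of Proposition~\ref{prop:negligible-adv} count as distinguishing, and it splits all $2^n$ configurations according to $b_i$ into two classes of size $2^{n-1}$. With $D(L)$ defined as in that proof, a single application reading $a_1^{b_1}$ already puts every configuration into $D(1)$.

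What a bound of $c$ local views per application really controls is branching, not an additive count. After $L$ steps a deterministic adversary's run, hence its output, takes at most $c^L$ values. Being correct on a $\beta$-fraction of the $2^n$ configurations therefore forces $c^L \ge \beta\, 2^n$, i.e.\ $L \ge (n-\log_2(1/\beta))/\log_2 c = \BigOm(n)$. This bound is attained: one application per gadget clause at the position of $a_i^{b_i}$ $i$-distinguishes for each $i$, and these $n$ applications separate every pair of distinct configurations. That matches the $L \ge n$ of Step~4 of Case~2. So the $\BigOm(2^n)$ conclusion cannot be reached along this route. Citing Proposition~\ref{prop:negligible-adv} does not close the gap, because its counting rests on the same conflation of classes with configurations.

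There is also the other horn. If the opacity in Lemma~\ref{lem:gadget}(ii) is read literally, no rule can ever use the identity of $a_i^0$ versus $a_i^1$. Then every run is identical on all configurations and no $L$ at all yields positive advantage. The statement becomes vacuous, and the $Lc/2^n$ accounting plays no role.

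Some smaller points:
\begin{itemize}
\item A fixed guess of an $n$-bit vector is correct with probability $2^{-n}$, not $1/2$ as claimed in the proof of Proposition~\ref{prop:negligible-adv}.
\item Correctness on a constant fraction $\beta$ gives positive advantage in the sense of Definition~\ref{def:adversary} only if $\beta > 1/2$, not for any constant fraction as you state.
\item $\Inst_n$ is a single fixed clause set whose configurations are the refutation's own choices of disjunct (Appendix~\ref{app:gadget}). Outputting the bit vector read off from the selected disjuncts therefore identifies nothing hidden unless the configuration is genuinely part of the input.
\end{itemize}

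The only exponential bound in the paper is that of Remark~\ref{rem:strongbound}. It comes from the instance containing $2^n$ clauses that each need a step, not from an identification argument, and it does not transfer to the gadget family $\Inst_n$.
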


\begin{proof}
Immediate from Case~2 of Theorem~\ref{thm:main} and
Proposition~\ref{prop:negligible-adv}; see also
Proposition~\ref{prop:zero-adv} for the perfect-hiding case.
\end{proof}

\begin{remark}[Witness indistinguishability]
\label{rem:wi}
Corollary~\ref{cor:nonextract} is the proof-theoretic analogue of
\emph{witness indistinguishability} in zero-knowledge protocols: no
efficient distinguisher can determine which of two valid witnesses was
used in a proof.
Here, the distinguisher is a syntactic adversary $\Adv$, and
efficiency is measured by the number of local rule applications.
\end{remark}

\begin{remark}[Theorem~\ref{thm:main} as a case of the observational collapse]
\label{rem:impagliazzo}
The observational hierarchy of~\cite{buono2026obs} introduces a formal
framework in which the computational axis (the Chomsky hierarchy and
complexity theory) and the observational axis are orthogonal and
independent.
An \emph{observer} is a function $O : \Sigma^* \to S$ that determines
which information about the input is accessible to a computational system.
The key result (Proposition~9.6 of~\cite{buono2026obs}) is that under
the profile observer $O_{\mathrm{prof}}$, which maps a string to its
symbol-count vector:
\[
  \mathbf{P}_{O_{\mathrm{prof}}} = \mathbf{NP}_{O_{\mathrm{prof}}}
  \subsetneq \mathbf{P}.
\]
This collapse is unconditional: it holds regardless of whether
$\mathsf{P} = \mathsf{NP}$, because it is caused by \emph{structural
blindness}, not computational hardness.
The observer $O_{\mathrm{prof}}$ discards the order of symbols entirely;
no combinatorial structure remains on which nondeterminism could act.

Theorem~\ref{thm:main} of the present paper is an instance of this
phenomenon.
The local syntactic system $\R = (\Sigma, V, \mathit{Rules}, r_0)$
defines a constrained observer $O_\R$ that maps each term $t$ and
position $p$ to the context $\ctx{r_0}{p}(t)$.
By Definition~\ref{def:local}, the applicability of every rule depends
only on $O_\R(t,p)$; information at distance $> r_0$ from $p$ and
information inside $\Fprot$ are invisible to $\R$.
In the language of~\cite{buono2026obs}:
\begin{itemize}
  \item $\R$ is a constrained observer $O_\R \prec O_\top$;
  \item Case~1 (impossibility) says the equivalence $\sk \equiv_\M \skb$
        is not $O_\R$-decidable: the two Skolem functions lie in disjoint
        protected regions that $O_\R$ cannot see simultaneously;
  \item Case~2 (lower bound) says that any local extension $\Rp$ requires
        $\BigOm(n)$ steps, a consequence of the structural blindness of
        $O_{\Rp}$ with respect to the $n$ independent gadgets.
\end{itemize}
Whether $O_\R$ admits a formal embedding into the canonical observer
hierarchy of~\cite{buono2026obs}, which would make
Theorem~\ref{thm:main} a corollary of Proposition~9.6, is
Question~(Q6) of Section~\ref{sec:open}.
The lower bound holds in every world of Impagliazzo~\cite{impagliazzo1995}, 
including $\mathsf{Algorithmica}$ ($\mathsf{P} = \mathsf{NP}$),
because it arises from the observational axis, which is orthogonal to
the computational axis that the five worlds parametrize.
This orthogonality is made explicit in~\cite{buono2026observer},
which embeds the five worlds into a two-dimensional landscape.

The same obstruction appears in the cipher of~\cite{buono2012,buono2026mrotp}.
The Mixed-Radix One-Time Pad encrypts as $c_i = (m_i + k_i) \bmod b_i$;
Shannon perfect secrecy holds regardless of whether the base sequence $B$
is public or secret (Section~4 of~\cite{buono2026mrotp}: secret bases do
not reduce key entropy).
The base $B$ functions as a syntactic invariant: it encodes the
decomposition structure of the integer, a semantic fact invisible to any
local syntactic observer, just as $\Fprot$ encodes the separation between
$\sk$ and $\skb$ that $\R$ cannot cross.
\end{remark}

The cryptographic and observational readings of Theorem~\ref{thm:main}
point to a general principle: the same obstruction appears wherever a
system acts locally on a representation while a semantic invariant is
encoded globally. The next section makes this precise in two further domains.

 \section{Cross-Domain Connections}
\label{sec:connections}
 
Section~\ref{sec:crypto} identified the core of Theorem~\ref{thm:main}:
the protected invariant is a commitment, the syntactic separation is
ciphertext indistinguishability, and the derivation cost is the
adversary's advantage.
This section shows that the same identification works in two further
domains, type theory and circuit complexity, not because these
are analogies to cryptography, but because they have the same
mathematical structure: a constrained observer that cannot see a
semantic invariant above its observational level.
In each case, what makes the impossibility or lower bound hold is
the same hiding phenomenon formalized in Section~\ref{sec:crypto}.

We present each connection as an explicit dictionary
(Table~\ref{tab:type-theory} and Table~\ref{tab:natural-proofs})
and state the corresponding obstruction as a corollary.

\subsection{Type theory and the Type Omitting Theorem}
\label{sec:conn-types}

\paragraph{Background.}
The Type Omitting Theorem (Henkin--Orey, 1956) states: a countable
consistent theory $T$ omits a type $\Phi(x)$, a set of formulas in
one free variable, if and only if $\Phi$ is \emph{not principal},
meaning no single formula $\theta(x)$ is consistent with $T$ and
implies every formula in $\Phi$.

Via the Curry--Howard correspondence, a type is a proposition and a
term of that type is a proof.
A type system $\mathcal{T}$ (Martin-L\"{o}f type theory, the Calculus of
Constructions, or a dependent type system such as Coq or Agda) acts as
a syntactic system: it inspects the \emph{syntactic structure} of terms
and decides type membership based on formation rules alone.

Extensional properties of functions, what a function computes rather
than how it is written, are semantically defined.
Rice's theorem is the limiting case: no non-trivial extensional property
of a computable function is decidable by any syntactic inspection of its
code.
Within a constructive type system, this takes the following form: a type
$\Phi$ expressing a semantic property of functions (e.g.\ ``this term
computes the Fibonacci sequence'') is \emph{semantically isolated} if
its membership depends on the function computed, not on the proof term,
and \emph{syntactically dense} if every proof term of a related type
passes through subterms that look locally like $\Phi$-witnesses.

\paragraph{The dictionary.}

\begin{table}[h]
\centering
\caption{Dictionary between Theorem~\ref{thm:main} and the
         constructive Type Omitting Theorem.}
\label{tab:type-theory}
\begin{tabular}{p{5.5cm}p{6.5cm}}
\toprule
\textbf{Theorem~\ref{thm:main}} & \textbf{Type theory} \\
\midrule
Local syntactic system $\R$ &
  Type-checking algorithm $\mathcal{T}$ \\[3pt]
Protected set $\Fprot$ &
  Extensional kernel: positions where the semantic property of the
  function manifests, inaccessible to $\mathcal{T}$ \\[3pt]
Syntactic invariant $\Inv$ &
  Typing invariant preserved by every reduction step
  ($\beta$-reduction, $\eta$-expansion) \\[3pt]
Skolem functions $\sk$, $\skb$ &
  Two proof terms $t$, $u$ of the same proposition $\phi$
  (propositional truncation: $\|{\phi}\|$) \\[3pt]
Semantic equivalence $\sk \equiv_\M \skb$ &
  Extensional equality: $t$ and $u$ compute the same function \\[3pt]
Case~1: no derivation proves $\sk \equiv_\M \skb$ &
  The type system cannot decide propositional equality of $t$ and $u$
  from syntactic structure alone \\[3pt]
Case~2: any extension costs $\BigOm(n)$ steps &
  Any decision procedure for extensional equality requires
  inspecting $n$ independent semantic witnesses \\
\bottomrule
\end{tabular}
\end{table}

\paragraph{The obstruction.}
A constructive type system $\mathcal{T}$ with reduction rules of bounded
depth is a local syntactic system in the sense of
Definition~\ref{def:local}, with locality radius $r_0$ equal to the
maximum depth of any reduction rule.
The extensional kernel of a semantically isolated type $\Phi$ is a
protected set: no reduction rule fires at positions where the semantic
property of the function manifests, because the property is
extensional and reductions preserve the computed function.

\begin{corollary}[Constructive Type Omitting]
\label{cor:type-omitting}
Let $\mathcal{T}$ be a constructive type system with reduction rules of
bounded depth $r_0$.
Let $\Phi$ be a type that is semantically isolated (membership depends
on the function computed, not the proof term) and syntactically dense
(every proof term of any related type passes through $\Phi$-witnesses).
Then $\mathcal{T}$ cannot derive the extensional equality of two
distinct proof terms $t$ and $u$ of the same proposition.
Any extension of $\mathcal{T}$ that decides extensional equality on a
family of $n$ independent witnesses requires at least $\BigOm(n)$
reduction steps.
\end{corollary}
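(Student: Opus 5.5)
The proof will reduce Corollary~\ref{cor:type-omitting} to Theorem~\ref{thm:main} by instantiating its hypotheses through the dictionary of Table~\ref{tab:type-theory}, and then read off Case~1 and Case~2.

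\textbf{Step~1: $\mathcal{T}$ as a local system.} Set $\R := \mathcal{T}$, with $\Sigma$ the term constructors of $\mathcal{T}$, $\mathit{Rules}$ its reduction rules ($\beta$, $\eta$, and the eliminations for inductive types), and $r_0$ the maximum depth of a left-hand side. Each redex pattern is finite and of depth at most $r_0$, so applicability depends only on $\ctx{r_0}{p}(t)$. This is locality in Definition~\ref{def:local}(i). Model independence (ii) holds because reduction is purely syntactic. For $\M$ we take any extensional model, such as the set-theoretic one, in which definitionally equal terms denote the same function. Reduction preserves the computed function, so this gives soundness in the sense of Definition~\ref{def:sound}.

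\textbf{Step~2: protected set, invariant, separation.} Introduce $t$ and $u$ as opaque constants $\sk$ and $\skb$ of the truncated type $\|\phi\|$. This is the Skolem-constant device of Example~\ref{ex:frozen}. No reduction rule mentions them, so freshness holds. Take $\Fprot_\sk$ and $\Fprot_\skb$ to be the positions headed by $\sk$ and $\skb$. Being opaque, these constants give a first-symbol clash with every redex pattern, exactly as in Lemma~\ref{lem:frozen-super}. Condition~(ii) of Definition~\ref{def:protected} also needs checking: when $\sk$ is applied to arguments, no rule should fire inside that subterm. Semantic isolation of $\Phi$ will be used to justify this, since the relevant arguments live in the extensional kernel. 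For $\Inv$, take ``no literal equates an $\sk$-headed term with a $\skb$-headed term''. Conditions (i), (ii), (iii) and (v) of Definition~\ref{def:invariant} then follow as in Lemma~\ref{lem:invariant-super}. Anchorage (iv) follows from syntactic density: every witness passes through $\Phi$-positions, so a violation must act inside $\Fprot$. Separation (Definition~\ref{def:separation}) is immediate once $\sk\neq\skb$ as symbols. Semantic equivalence $\sk\equiv_\M\skb$ is the hypothesis that $t$ and $u$ compute the same function. Case~1 then says $\mathcal{T}$ derives no equation $t=u$.

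\textbf{Step~3: the lower bound.} For the second claim, let $\Rp$ be the extension deciding extensional equality, presented in refutation form. Take the $n$ independent witnesses to be gadgets as in Lemma~\ref{lem:gadget}: pairs of locally indistinguishable proof terms $a_i^0$ and $a_i^1$ at separated positions. Case~2 of Theorem~\ref{thm:main}, via Corollary~\ref{cor:expbound}(i), then yields $\BigOm(n)$ reduction steps.

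\textbf{Main obstacle.} The hardest step is Step~2, specifically verifying protection and anchorage for genuine type-theoretic terms. Proof terms are not opaque: $\beta$-reduction can fire inside $t$ and $u$, and it duplicates and substitutes subterms. First I would keep $t$ and $u$ abstract, as postulated or opaque definitions, so that protection is syntactic. If that is too weak, I would instead relativize $\Fprot$ to the $\beta\eta$-normal forms of $t$ and $u$. The task would then be to show that no reduction from a normal form creates a redex straddling the $\sk$ and $\skb$ regions, which is the analogue of Lemma~\ref{lem:frozen-super} for this setting.
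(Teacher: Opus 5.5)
Your overall route, including the appeal to Case~2 in Step~3, matches the paper's: read $\mathcal{T}$ as a local system of radius $r_0$, instantiate Theorem~\ref{thm:main}, and read off both cases. The gap is in Step~2, and it goes beyond the one you flag. With $\Fprot$ the $\sk$- and $\skb$-headed positions and $\Inv$ = ``no literal equates an $\sk$-headed with a $\skb$-headed term'', coherence is built in. Case~1 therefore rests entirely on preservation~(iii) and anchorage~(iv), and these do \emph{not} follow as in Lemma~\ref{lem:invariant-super}. That lemma depends on the special shape of \eqref{eq:A1} and \eqref{eq:A2}: they are linear, erase nothing but $0$, and never fire at a $+$-node whose first argument is $a$ or $b$, so the contexts $a{+}u$ and $b{+}v$ survive. $\beta$-reduction, by contrast, erases, duplicates and strips contexts at positions outside $\Fprot$. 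The well-typed literal $(\lambda x.x)\,\sk=\skb$ satisfies your $\Inv$, and its occurrences of $\sk,\skb$ are protected. Yet one $\beta$-step at the root of its left side, a position not in $\Fprot$, yields $\sk=\skb$. Syntactic density does not help, since it constrains where $\Phi$-witnesses occur, not where redexes occur. Nor can semantic isolation give condition~(ii) of Definition~\ref{def:protected} once $\sk$ takes arguments, since a redex inside $\sk(\cdots)$ fires whatever function is computed. No cleverer $\Inv$ fixes this either: if the initial clauses contain that literal, initialization and preservation force $\Inv$ onto the terms of $\sk=\skb$, contradicting coherence. The paper's instantiation (``extensional kernel'' as $\Fprot$, typing invariant as $\Inv$, preservation by subject reduction) meets the same wall at~(v). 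The literal $\sk=\skb$ is itself well typed, and the paper asserts coherence rather than verifying it. So the paper's route does not supply what you are missing; a restriction on the admissible clause sets has to be added.

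The obstacle you flag is also real, and neither fallback removes it. Making $t,u$ postulates proves a statement about opaque constants, not about the proof terms the corollary quantifies over; their internal redexes are exactly what breaks protection. Relativizing $\Fprot$ to $\beta\eta$-normal forms fails whenever $t$ and $u$ are distinct but convertible. Their normal forms then coincide, so $\Fprot_\sk$ and $\Fprot_\skb$ are not disjoint (Definition~\ref{def:separation}(iv)), and $\mathcal{T}$ really does derive $t=u$ by conversion. Finally, Theorem~\ref{thm:main} only constrains derivations built from the rules you put in $\mathit{Rules}$. Deriving an equation in $\mathcal{T}$ also uses reflexivity, transport and function extensionality if present. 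For your choice $\sk,\skb:\|\phi\|$ it also uses the fact that a truncation is a mere proposition, which gives $\sk=\skb$ outright. So the last sentence of your Step~2 (``Case~1 then says $\mathcal{T}$ derives no equation $t=u$'') does not follow. You would need at least three hypotheses the statement does not provide: non-convertibility of $t$ and $u$; a restriction on clause sets ensuring no redex ever forms above an occurrence of $\sk$ or $\skb$; and ``derive'' read as ``derive by reduction alone''.
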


\begin{proof}
The type system $\mathcal{T}$ is a local syntactic system
(Definition~\ref{def:local}) with locality radius $r_0$ equal to the
maximum depth of any reduction rule, and the extensional kernel as
protected set.
The typing invariant serves as the syntactic invariant:
(i)~it is checkable within depth $r_0$ of each redex;
(ii)~it holds on any well-typed initial term by assumption;
(iii)~it is preserved under reduction by the subject reduction theorem;
(iv)~violation requires reducing inside the extensional kernel
(a position where the function is determined), which is protected;
(v)~two proof terms $t$ and $u$ of the same proposition under
propositional truncation (in the sense of homotopy type theory:
all proofs of $\phi$ are identified, so $t$ and $u$ are distinct
only at the level of their syntactic structure) occupy disjoint
positions in the type structure, so no typing judgment in any
derivation has the form $t =_{\mathrm{ext}} u$ with $t$ headed by
$f_0$ and $u$ headed by $f_1$ (the coherence condition of
Definition~\ref{def:invariant}(v)).
The two proof terms $t$ and $u$ are semantically equivalent
(extensionally equal) and syntactically separated in $\mathcal{T}$.
Theorem~\ref{thm:main} applies directly.
\end{proof}

\begin{remark}[Typological Invariance Principle]
\label{rem:typ-invariance}
Corollary~\ref{cor:type-omitting} can be restated as:
\emph{a constructive type system cannot, from the syntactic structure of
a proof term alone, determine the extensional property of the function
the term computes.}
This is the Curry--Howard translation of Theorem~\ref{thm:main}:
the syntactic machine (the type) is structurally blind to the semantic
property of the model (the function computed).
The Type Omitting Theorem is Case~1; the lower bound on decision
procedures is Case~2.
\end{remark}

\subsection{Circuit complexity and the Natural Proofs barrier}
\label{sec:conn-circuits}

\paragraph{What this section contributes.}
The Natural Proofs barrier of Razborov and
Rudich~\cite{razborov1994} is a classical result.
We do not reprove it.
Instead, Theorem~\ref{thm:main} yields two things that
Razborov--Rudich does not contain:
\begin{enumerate}[label=(\arabic*)]
  \item An \emph{unconditional} lower bound of $\BigOm(n)$ on the
        number of inspection steps required by any large and
        constructive technique to decide circuit complexity on $n$
        independent witnesses, with no pseudo-random generator.
        This applies to all constructive techniques, not only those
        whose distinguishers are computable by $\mathsf{AC}^0$ circuits.
        The result of~\cite{loff2026} on $\mathsf{AC}^0$-natural proofs
        is a quantitatively precise instance of the present framework
        (Remark~\ref{rem:loff-obs}): the distinguishers of~\cite{loff2026}
        are local syntactic systems with $r_0 = \mathsf{AC}^0$ circuit depth,
        and Corollary~\ref{cor:natural-proofs}(i) gives the underlying
        unconditional $\BigOm(n)$ bound; the sharper quantitative
        bound of~\cite{loff2026} requires the Trevisan--Xue generator
        as gadget family.
  \item A structural explanation: the barrier is not a consequence of
        computational hardness but of observational blindness.
        Any large, constructive technique is a constrained observer
        $O_{\mathcal{P}} \prec O_\top$ that cannot see the circuit
        complexity of $f$ because complexity is a property of the
        function, not of its truth table.
        The PRG assumption enters only to convert the unconditional
        $\BigOm(n)$ bound into the $\BigOm(2^n)$ bound by supplying
        $2^n$ locally indistinguishable instances.
\end{enumerate}

\paragraph{Background.}
Let $f : \{0,1\}^n \to \{0,1\}$ be a Boolean function and $C_f$ a
circuit computing it.
A \emph{natural proof} in the sense of Razborov and
Rudich~\cite{razborov1994} is a property $\mathcal{P}$ of Boolean
functions (equivalently, of their truth tables) satisfying:
\begin{itemize}
  \item \textbf{Usefulness.} $\mathcal{P}$ holds for functions not in
        $\mathsf{P}/\mathsf{poly}$ (it witnesses circuit hardness);
  \item \textbf{Largeness.} At least $2^{-\mathrm{poly}(n)}$ of all
        Boolean functions on $n$ bits satisfy $\mathcal{P}$;
  \item \textbf{Constructivity.} Given the truth table of $f$
        (of length $2^n$), membership $f \in \mathcal{P}$ is decidable
        in time $2^{O(n)}$, i.e., polynomial in the truth table size.
\end{itemize}
The proof in~\cite{razborov1994} shows that if a secure pseudo-random generator
(PRG) exists in $\mathsf{P}/\mathsf{poly}$, then no natural proof can
establish a super-polynomial circuit lower bound.
The PRG assumption is used to show that any large, constructive
property $\mathcal{P}$ is \emph{falsified} by the PRG: the PRG outputs
a string that looks like the truth table of a function satisfying
$\mathcal{P}$ even though the generator itself is in
$\mathsf{P}/\mathsf{poly}$.

\paragraph{The precise dictionary.}
The translation between Theorem~\ref{thm:main} and the Natural Proofs
barrier is as follows.

\begin{table}[h]
\centering
\caption{Dictionary between Theorem~\ref{thm:main} and the Natural
         Proofs barrier.}
\label{tab:natural-proofs}
\begin{tabular}{p{5.5cm}p{6.5cm}}
\toprule
\textbf{Theorem~\ref{thm:main}} & \textbf{Natural Proofs} \\
\midrule
Local syntactic system $\R$ &
  Constructive property $\mathcal{P}$ (acts on truth tables via
  polynomial-time inspection) \\[4pt]
Locality radius $r_0$ &
  Inspection depth: $\mathcal{P}$ reads at most $\mathrm{poly}(n)$
  bits of the truth table in any single evaluation \\[4pt]
Protected set $\Fprot$ &
  Circuit complexity: the hardness of $f$ as a property of the
  function, not of the truth table \\[4pt]
Syntactic invariant $\Inv$ &
  Largeness: $\mathcal{P}$ holds for most functions and is
  preserved by the PRG's action on truth tables \\[4pt]
Syntactic separation of $\sk$, $\skb$ &
  Falsifiability by PRG: the PRG produces $2^n$ truth tables
  each individually satisfying $\mathcal{P}$ while the underlying
  function is in $\mathsf{P}/\mathsf{poly}$ \\[4pt]
Case~1: no derivation proves $\sk \equiv_\M \skb$ &
  No natural proof separates $\mathsf{P}/\mathsf{poly}$ from the
  hard functions (assuming PRG) \\[4pt]
Case~2: any extension costs $\BigOm(n)$ steps &
  Any technique that escapes the barrier must inspect $\BigOm(n)$
  independent truth-table witnesses \\
\bottomrule
\end{tabular}
\end{table}

\paragraph{The obstruction.}
Fix $n$.
The truth table of $f$ is a string in $\{0,1\}^{2^n}$; treat it as a
term in a signature where each bit position is a constant symbol.
A constructive property $\mathcal{P}$ decides membership by reading
at most $\mathrm{poly}(n)$ bits of this string, making it a
local syntactic system with locality radius $r_0 = \mathrm{poly}(n)$.

The largeness condition ensures $\mathcal{P}$ holds on the initial
term (a random truth table), is preserved by the PRG's local
modifications to the string, and can only be falsified by accessing
the circuit complexity of $f$, which is a property of the function,
not of the truth table, and constitutes the protected set $\Fprot$.

The PRG produces $2^n$ pseudo-random truth tables $T_1, \ldots, T_{2^n}$
that are mutually locally indistinguishable: for any position $p$ and
any inspection window of size $\mathrm{poly}(n)$, the local view of
$T_i$ at $p$ equals the local view of $T_j$ at $p$.
These are the $2^n$ configurations of the gadget family
$\{\Inst_n\}_{n \geq 1}$, with the PRG security guaranteeing
Lemma~\ref{lem:gadget}(ii) in this setting.

\begin{corollary}[Structural Natural Proofs Barrier]
\label{cor:natural-proofs}
Let $\mathcal{P}$ be a large and constructive property of Boolean
functions (Definition~\ref{def:local} with locality radius
$r_0 = \mathrm{poly}(n)$).
\begin{enumerate}[label=(\roman*)]
  \item \textbf{Unconditional lower bound.}
        Any extension of $\mathcal{P}$ that correctly decides circuit
        complexity on $n$ independent function witnesses requires at
        least $\BigOm(n)$ truth-table inspection steps.
        This holds without any assumption on pseudo-random generators.
  \item \textbf{PRG-conditional exponential bound.}
        If a secure PRG in $\mathsf{P}/\mathsf{poly}$ of output length
        $2^n$ exists, it supplies exactly the $2^n$ globally distinct
        but locally indistinguishable truth tables required by the
        gadget construction (Remark~\ref{rem:strongbound} below).
        Case~2 of Theorem~\ref{thm:main} then gives $L \geq \BigOm(2^n)$,
        and the Razborov--Rudich barrier follows: no natural proof
        can establish a super-polynomial circuit lower bound.
        The PRG is used exclusively to supply these configurations;
        it does not enter the proof of structural blindness itself.
\end{enumerate}
\end{corollary}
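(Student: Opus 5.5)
The plan is to derive Corollary~\ref{cor:natural-proofs} by instantiating Theorem~\ref{thm:main} through the dictionary of Table~\ref{tab:natural-proofs}, treating each part separately.

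\textbf{Setting up the local system.} I will take the signature to consist of the bit constants $0,1$ together with one $2^n$-ary concatenation symbol, so that a truth table is a depth-one term. I will model $\mathcal{P}$ as a local syntactic system in the sense of Definition~\ref{def:local}, with $r_0$ large enough to cover the $\mathrm{poly}(n)$ bits read in one evaluation. Locality and model independence then both hold, because a constructive property decides membership from the bits it reads and nothing else. For the protected set $\Fprot$, I will use the positions carrying the circuit-complexity information, which the inspection rules never rewrite.

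\textbf{Part (i).} Here I will not invoke any pseudorandom generator. Instead I will build the $n$ independent witnesses as spatially separated gadgets $G_1,\ldots,G_n$, following Lemma~\ref{lem:gadget}. Each $G_i$ is a pair of truth-table blocks $a_i^0,a_i^1$ that agree on every window of radius $r_0$ but differ globally. Step~3 and Step~4 of the proof of Case~2 then apply unchanged. Claim~\ref{claim:distinguish} forces at least one $i$-distinguishing inspection per gadget, and spatial separation makes these $n$ inspections pairwise distinct, so $L\ge n$.

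\textbf{Part (ii).} I will use the secure generator only to supply the $2^n$ configurations of the clause-per-configuration encoding in Remark~\ref{rem:strongbound}. Security gives condition (ii) of Lemma~\ref{lem:gadget}, namely local indistinguishability against $\mathrm{poly}(n)$-window inspections. Stretch gives condition (iii), global distinctness. With these in place, Case~2 yields $L\ge\BigOm(2^n)$, or equivalently the advantage bound of Proposition~\ref{prop:negligible-adv}. Since a constructive $\mathcal{P}$ runs in time $2^{O(n)}$ but must separate $\mathsf{P}/\mathsf{poly}$ truth tables from random ones with noticeable advantage, this would contradict Corollary~\ref{cor:hiding} once the constants are compared, and the Razborov--Rudich conclusion follows.

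\textbf{Main obstacle.} I expect the hardest step to be making the generator's computational indistinguishability match the exact equality of contexts that Lemma~\ref{lem:gadget}(ii) and Theorem~\ref{thm:main} require. Pseudorandom outputs are not literally equal on every window, so I cannot invoke the lemma as stated. My first attempt will be to replace exact equality with a hybrid argument: charge each rule application a per-step distinguishing advantage of $\varepsilon(n)$, and rerun the counting in the proof of Proposition~\ref{prop:negligible-adv} with $|D(L)|\le L\cdot c$ replaced by an additive $L\varepsilon(n)$ term. A second, related concern arises in part (i): a time budget of $2^{O(n)}$ combined with $r_0=\mathrm{poly}(n)$ may let a single rule see an entire gadget. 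To rule this out I will need the gadget blocks to have size superpolynomial in $n$.
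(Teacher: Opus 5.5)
Your plan follows the paper's route: instantiate Theorem~\ref{thm:main} through the dictionary, get part~(i) from the per-gadget count $L\ge n$, and get part~(ii) by using the generator as the clause-per-configuration family of Remark~\ref{rem:strongbound}. The steps you add to make this rigorous, however, do not go through.

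\textbf{The Razborov--Rudich step fails.} The decisive failure is the passage from $L\ge\BigOm(2^n)$ to the Razborov--Rudich conclusion.
\begin{itemize}
\item Proposition~\ref{prop:negligible-adv} and Corollary~\ref{cor:hiding} concern identifying \emph{which} of the $2^n$ configurations is active. A natural property breaks the generator differently: it separates the whole family from uniformly random truth tables. It rejects every pseudo-random table (usefulness) and accepts a random one with probability $2^{-O(n)}$ (largeness). Since it gives the same answer on every configuration, it learns nothing about $(b_1,\ldots,b_n)$, and nothing in Section~\ref{sec:crypto} is contradicted.
\item The budgets do not clash either. $L\ge\delta 2^n/c$ excludes only adversaries polynomial in $n$, while constructivity grants $2^{O(n)}$ steps, enough to pay $\BigOm(2^n)$. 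With $\delta=2^{-O(n)}$, the bound $\delta 2^n/c$ may even be $O(1)$. Indeed, the $\BigOm(2^n)$ of Remark~\ref{rem:strongbound} is one step per clause of an instance with $2^n$ clauses, so it is at most the input size. ``Comparing constants'' cannot fix this.
\item What actually yields the barrier is the standard argument. Take a generator of hardness $2^{k^{\epsilon}}$ in the seed length $k$ and turn it (Goldreich--Goldwasser--Micali) into pseudo-random functions on $n=k^{\epsilon'}$ input bits with $\mathrm{poly}(n)$-size circuits. Then $\mathcal{P}$ itself, running in time $2^{O(n)}$ with advantage $2^{-O(n)}$, is a forbidden distinguisher. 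That argument never passes through Theorem~\ref{thm:main}, and it needs this exponential-hardness regime, not just some secure generator of output length $2^n$.
\end{itemize}

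\textbf{Local indistinguishability cannot be realized for truth tables.} The obstacle you flag is genuine, and your repair does not remove it.
\begin{itemize}
\item Exact equality of contexts in Lemma~\ref{lem:gadget}(ii) comes from opacity: $a_i^0,a_i^1$ are fresh symbols that no rule pattern mentions. Truth tables differ in the bits $0,1$, which are exactly what $\mathcal{P}$ reads. Two blocks that agree on every window covering them are therefore equal, so the gadgets you posit for part~(i) (``agree on every window of radius $r_0$ but differ globally'') do not exist.
\item With your depth-one encoding every bit is at distance $1$ from the root, so $\ctx{1}{\varepsilon}(t)$ is already the whole table. Enlarging the blocks does not help, because the radius counts tree depth, not bits; a $2^{O(n)}$-time property reads the whole table anyway.
\item If the blocks really were indistinguishable to every rule, no rule application could $i$-distinguish them. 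Claim~\ref{claim:distinguish} would then say no correct extension exists, and the $\BigOm(n)$ bound would hold only vacuously.
\item Steps~3--4 rest on selecting a disjunct of $\sk(a_i^0)\neq\skb(a_i^0)\vee\sk(a_i^1)\neq\skb(a_i^1)$. That has no counterpart in deciding circuit complexity from a truth table, so these steps do not ``apply unchanged''.
\end{itemize}

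\textbf{The hybrid repair fails.} Local indistinguishability does not compose. Later rule applications read terms built by earlier ones, so radius-$r_0$ steps can compute global parities. Concretely, a $\mathrm{poly}(n)$-wise independent distribution supported on a proper linear code is exactly uniform on every $\mathrm{poly}(n)$-bit window, so the per-step $\varepsilon$ is $0$. Yet a parity check separates it from uniform with advantage close to $1$. Even granting a per-step $\varepsilon(n)$, the total $L\varepsilon(n)$ with $L=2^{O(n)}$ is small only when $\varepsilon(n)\le 2^{-\Omega(n)}$, which is the exponential-hardness regime again.

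The paper's proof does not supply these bridges either. Its items~(iii) and~(v) simply identify generator security with Lemma~\ref{lem:gadget}(ii), and it defers the identification of rewriting rules to Question~(Q6).
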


\begin{proof}
The translation described above makes $\mathcal{P}$ a local syntactic
system with the circuit complexity of $f$ as protected set.
We verify Definition~\ref{def:invariant}:
(i)~$\mathcal{P}$ is checkable by inspecting at most $r_0 = \mathrm{poly}(n)$
bits, giving the locality condition;
(ii)~largeness ensures $\mathcal{P}$ holds on any initial random truth
table;
(iii)~treating the PRG's action on the truth table as the analogue
of a rewriting rule, $\mathcal{P}$ is preserved under this action
(by PRG security: a $\mathrm{poly}(n)$-bit local inspection cannot
distinguish PRG output from random); making the identification of
rewriting rules precise in this setting is part of Question~(Q6);
(iv)~any violation of $\mathcal{P}$ requires accessing the circuit
complexity of $f$, which lies in $\Fprot$;
(v)~the $2^n$ PRG truth tables are locally indistinguishable
(as in Lemma~\ref{lem:gadget}(ii)), so no local inspection sees
two of them at comparable positions simultaneously.

Part~(i) applies Case~2 of Theorem~\ref{thm:main} with $n$ independent
witnesses; the bound $\BigOm(n)$ follows without the PRG assumption.
Part~(ii) applies Case~2 with $N = 2^n$ PRG instances as the
clause-per-configuration family (Remark~\ref{rem:strongbound}),
giving $\BigOm(2^n)$; the PRG is needed to guarantee
local indistinguishability of the $2^n$ instances.
\end{proof}

\begin{remark}[Loff--Sherif--Talebanfard--Ugazio in the observational hierarchy]
\label{rem:loff-obs}
The result of~\cite{loff2026} admits a precise reading within the
observational hierarchy of~\cite{buono2026obs,buono2026observer}.

\emph{Translation.}
An $\mathsf{AC}^0$-natural proof in their sense is a
property $\mathcal{P}$ whose distinguisher is a Boolean circuit of
constant depth $d$.
In the framework of Section~\ref{sec:conn-circuits}, this makes
$\mathcal{P}$ a local syntactic system with locality radius
$r_0 = d$, acting on truth tables viewed as strings of $2^n$ bits.
The induced constrained observer is $O_{\mathsf{AC}^0_d}$: it maps
a truth table $f$ to the output of a depth-$d$ circuit applied to $f$.
Since $O_{\mathsf{AC}^0_d}$ computes only depth-$d$ functions of $f$,
it satisfies $O_{\mathsf{AC}^0_d} \prec O_\top$, and the circuit
complexity of $f$, a property of the \emph{function}, not of any
particular circuit representation, lies strictly above the
observational level of $O_{\mathsf{AC}^0_d}$ in the hierarchy
of~\cite{buono2026obs}.

\emph{The structural collapse.}
Proposition~9.6 of~\cite{buono2026obs} gives:
\[
  \mathbf{P}_{O_{\mathsf{AC}^0_d}} \;=\;
  \mathbf{NP}_{O_{\mathsf{AC}^0_d}} \;\subsetneq\; \mathbf{P}.
\]
Applied to $O_{\mathsf{AC}^0_d}$, this says that no technique
operating through depth-$d$ observations can decide circuit complexity,
which accounts for their barrier at the structural level.
Corollary~\ref{cor:natural-proofs}(i) instantiates this collapse
quantitatively: any such technique requires $\BigOm(n)$ inspection
steps on $n$ independent witnesses, unconditionally.

\emph{The PRG as gadget family.}
The localized Trevisan--Xue generator of~\cite{loff2026} is computable
by bounded-depth circuits and its security relies solely on the
Switching Lemma --- a combinatorial theorem that holds unconditionally.
It produces truth tables that are locally indistinguishable to any
$\mathsf{AC}^0$ distinguisher, supplying exactly the locally
indistinguishable gadget family of Remark~\ref{rem:strongbound}.
With $N = 2^n$ such instances, Case~2 of Theorem~\ref{thm:main}
gives a lower bound of $\BigOm(2^n)$ inspection steps;
the localized generator achieves local indistinguishability for
$N_{\mathrm{eff}} = 2^{n^{7/(d-5)}}$ effectively distinct instances
at depth $d$ (for $d > 5$, the regime where the Switching Lemma
applies); applying Case~2 of Theorem~\ref{thm:main} with
$N = N_{\mathrm{eff}}$ gives $L \geq \BigOm(N_{\mathrm{eff}})
= \BigOm(2^{n^{7/(d-5)}})$, matching the Switching Lemma frontier.

\emph{Significance for the framework.}
This is the first concrete confirmation, with a quantitative bound,
that the observational axis of~\cite{buono2026obs} yields sharp
results in circuit complexity.
The barrier of~\cite{loff2026} holds unconditionally in every world
of Impagliazzo~\cite{impagliazzo1995} (including Algorithmica, where
$\mathsf{P} = \mathsf{NP}$) because its PRG is combinatorially secure,
not cryptographically secure: it lives on the observational axis, not
the computational one.
The framework of Theorem~\ref{thm:main} predicts this: barriers on
the observational axis are independent of which computational world
one inhabits.
\end{remark}

\begin{remark}[$\mathsf{P}$ vs $\mathsf{NP}$,
derandomization, and the Observer World]
\label{rem:pvsnp}
Part~(i) of Corollary~\ref{cor:natural-proofs} has an immediate
consequence for the $\mathsf{P}$ vs $\mathsf{NP}$ question.
If $\mathsf{P} = \mathsf{NP}$, then in particular
$\mathsf{P} = \mathsf{PSPACE}$, and questions about circuit complexity
would be decidable in polynomial time
(the minimum circuit size problem lies in $\mathsf{PSPACE}$).
Part~(i) shows that no technique for deciding circuit hardness can be
both large (holding for $2^{-\mathrm{poly}(n)}$ of all Boolean functions)
and constructive.
Therefore: \emph{if $\mathsf{P} = \mathsf{NP}$, any proof of this
fact via circuit-hardness decision must be either non-large or
non-constructive, it cannot be a natural proof in the sense of
Razborov and Rudich}.
This is stronger than what Razborov--Rudich give under the PRG
assumption: Part~(i) holds unconditionally, so the structural
blindness of any local syntactic technique to circuit complexity is
independent of both $\mathsf{P}$ vs $\mathsf{NP}$ and the existence
of a PRG.

On derandomization: if $\mathsf{P} = \mathsf{BPP}$ (randomness can be
eliminated), the Nisan--Wigderson construction gives PRG in
$\mathsf{P}/\mathsf{poly}$.
Part~(ii) then implies the full Razborov--Rudich barrier.
But Part~(i) already holds in the non-derandomizable world: the
$\BigOm(n)$ lower bound requires no PRG, no randomness assumption, and
no separation hypothesis.
The structural blindness identified in Theorem~\ref{thm:main}
precedes all of these and is not removed by derandomization.

In the language of the Observer World~\cite{buono2026observer}, the
technique $\mathcal{P}$ is a constrained observer $O_{\mathcal{P}}
\prec O_\top$ on truth tables.
The circuit complexity of $f$ lies strictly above the observational
level of $O_{\mathcal{P}}$ in the hierarchy of~\cite{buono2026obs}.
The $\mathsf{P}$ vs $\mathsf{NP}$ problem asks about the
\emph{computational} axis; the Natural Proofs barrier operates on
the \emph{observational} axis.
The two axes are orthogonal~\cite{buono2026observer}: resolving
$\mathsf{P}$ vs $\mathsf{NP}$ would not remove the observational
barrier, and the observational barrier holds regardless of how
$\mathsf{P}$ vs $\mathsf{NP}$ is resolved.
The result of~\cite{loff2026}
--- a quantitatively sharp unconditional barrier for
$\mathsf{AC}^0$-natural proofs, is a concrete confirmation of
this: see Remark~\ref{rem:loff-obs}.
\end{remark}

\begin{remark}[The observational reading]
\label{rem:obs-reading}
In the language of~\cite{buono2026obs}, both $\mathcal{T}$ (the type
system of Section~\ref{sec:conn-types}) and $\mathcal{P}$ (the
lower-bound technique above) are constrained observers $O \prec O_\top$.
The semantic invariant --- extensional equality of proof terms in type
theory, circuit complexity in the Razborov--Rudich setting --- lies
strictly above the observational level of $O$ in the hierarchy.
The impossibility results (Corollary~\ref{cor:type-omitting} and
Corollary~\ref{cor:natural-proofs}(i)) are unconditional instances of
the structural collapse $\mathbf{P}_O = \mathbf{NP}_O \subsetneq
\mathbf{P}$ (Proposition~9.6 of~\cite{buono2026obs}): not because the
problem is computationally hard, but because the observer cannot see
the relevant invariant at all.
\end{remark}

The corollaries above (Corollaries~\ref{cor:type-omitting}
and~\ref{cor:natural-proofs}) are formal results that follow from
Theorem~\ref{thm:main} given the stated hypotheses.
The dictionaries (Tables~\ref{tab:type-theory}
and~\ref{tab:natural-proofs}) identify which component of
Theorem~\ref{thm:main} each component of the cited result
corresponds to, explaining why each impossibility or lower bound
is a consequence of observational blindness rather than
computational hardness.
Making the correspondences fully rigorous, verifying the
hypotheses of Theorem~\ref{thm:main} in the exact technical
setting of each domain, raises the questions taken up next.

\section{Conclusions}
\label{sec:conclusions}
 
The central observation of this paper is that impossibility results and
lower bounds in proof theory, cryptography, type theory, and circuit
complexity share a single structure: a constrained observer cannot
reach a semantic invariant that lies above its observational level.
Theorem~\ref{thm:main} formalizes this as an abstract obstruction with
two faces, impossibility (Case~1) and a cost lower bound (Case~2),
and the cross-domain connections of Section~\ref{sec:connections} show
that the translation is not metaphorical but exact.

These three components of the paper, the proof-theoretic analysis,
the cryptographic connection, and the cross-domain connections, are not separable.
The exponential lower bound of Case~2 is explained by the cryptographic
hiding structure (Section~\ref{sec:crypto}): without it, the
$\BigOm(2^n)$ bound would appear as an artifact of the gadget
construction rather than as a structural necessity.
The cross-domain connections (Section~\ref{sec:connections}) are
accessible only because Section~\ref{sec:crypto} has already identified
computational hiding as the correct interpretation of syntactic
separation: Type Theory and Natural Proofs barriers are then
recognizable as instances of the same hiding phenomenon, not as
separate results requiring separate arguments.

The cryptographic content is not peripheral.
The structure of Case~2 is the structure of computational hiding:
protected positions are commitment schemes, syntactic separation is
ciphertext indistinguishability, and the $\BigOm(2^n)$ derivation cost
is the adversary's negligible advantage.
The proof-theoretic and cryptographic frameworks are the same framework,
seen from different sides of the same invariant.

For circuit complexity, the framework provides two things.
First, an unconditional $\BigOm(n)$ lower bound on the inspection cost
of any large and constructive technique, without pseudo-random generators.
Second, a structural explanation: the Natural Proofs barrier is not a
consequence of computational hardness but of observational blindness,
and it holds in every world of Impagliazzo, including Algorithmica.
The result of~\cite{loff2026} is the first
quantitative confirmation of this: its $\mathsf{AC}^0$-natural-proofs
barrier lives on the observational axis, not the computational one.

\section{Open Questions}
\label{sec:open}
 
The following questions arise from Theorem~\ref{thm:main} and the
surrounding framework but are not answered in this paper.

\begin{enumerate}[label=(Q\arabic*)]

  \item \textbf{Decidability of protected sets.}
        For which classes of local syntactic systems is the existence
        of a non-trivial protected set $\Fprot$ decidable in time
        polynomial in the clause set?
        Remark~\ref{rem:decidable-frozen} shows that linear-time
        decidability holds for ground atoms under first-symbol clash.
        The general case --- patterns with deep nesting or variable
        overlap --- is open.

  \item \textbf{Formal reduction between Cases~1 and~2.}
        Does the existence of a syntactic invariant $\Inv$ in Case~1
        imply the existence of a gadget family witnessing the Case~2
        lower bound for some extension $\Rp$?
        A positive answer would unify the two cases into a single
        combinatorial criterion.

  \item \textbf{Instance complexity of the lower bound.}
        The gadget family $\{\Inst_n\}$ grows with $n$.
        Is there a single clause set $\Inst$ of fixed size $M$ such that
        any local extension $\Rp$ requires derivation length
        super-polynomial in $M$ to prove $\sk \equiv \skb$ on $\Inst$?
        This is the Case~2 analogue of the single witness $W$
        (commutativity) in Case~1.

  \item \textbf{Commitment scheme construction.}
        Corollary~\ref{cor:hiding} shows that Case~2 implies
        computational hiding.
        Does every instance of Case~2 also yield an explicit binding
        property, making it a commitment scheme in the standard
        (hiding and binding) sense?
        Specifically, is there a polynomial-time reduction from
        breaking the binding to solving a derivation problem in $\Rp$?

  \item \textbf{Extensions to non-local systems.}
        The results assume a fixed locality radius $r_0$.
        Can analogous obstruction theorems be proved for systems with
        unbounded inspection depth, or for global rules such as
        resolution?
        Corollary~\ref{cor:natural-proofs}(i) shows unconditionally that
        no natural proof technique can establish circuit lower bounds;
        any extension to global systems that overcomes the syntactic
        barrier must therefore not be a natural proof in the sense of
        Razborov--Rudich~\cite{razborov1994}.

  \item \textbf{Formal embedding into the observational hierarchy.}
        Remark~\ref{rem:impagliazzo} identifies $\R$ as a constrained
        observer $O_\R \prec O_\top$.
        Remark~\ref{rem:loff-obs} carries this out concretely for
        $\mathsf{AC}^0$-natural proofs: $O_{\mathsf{AC}^0_d} \prec O_\top$
        with the collapse $\mathbf{P}_{O_{\mathsf{AC}^0_d}} =
        \mathbf{NP}_{O_{\mathsf{AC}^0_d}} \subsetneq \mathbf{P}$.
        The general case requires: (i)~embedding $(t,p) \mapsto
        \ctx{r_0}{p}(t)$ into the string-based formalism
        of~\cite{buono2026obs}; (ii)~identifying the canonical observer
        level of $O_\R$; (iii)~deriving Theorem~\ref{thm:main} as a
        corollary of Proposition~9.6 of~\cite{buono2026obs}.
        The mixed-radix decomposition of~\cite{buono2026b}
        provides a concrete observational invariant whose embedding
        may serve as a test case for step~(i).

  \item \textbf{Obfuscation.}
        The VBB impossibility~\cite{barak2001} states that
        no algorithm (not necessarily syntactic) can obfuscate
        all circuits to black-box equivalence.
        This matches Case~1 structurally: the semantic invariant (the
        function computed) is protected from any local syntactic observer.
        Does Theorem~\ref{thm:main} formally subsume VBB impossibility,
        and does the Case~2 bound correspond to the computational gap
        between iO and VBB?

  \item \textbf{Algebrization.}
        The algebrization barrier~\cite{aaronsonwigderson2009}
        shows that arithmetization-based techniques cannot resolve
        complexity separations.
        Arithmetization acts locally on the syntactic representation
        of a circuit, and the semantic invariant (complexity class
        membership) is protected from this local action.
        Does the algebrization barrier fit Theorem~\ref{thm:main}
        with $r_0$ equal to the degree of the polynomial extension?

\end{enumerate}

 \section*{Acknowledgments}
 
The mathematical content of this paper was developed solely 
by the author and originated in Italian.

English translation, stylistic adaptation to standard mathematical
prose, \LaTeX\ typesetting, bibliography formatting, cross-reference
management, and iterative editorial review were carried out with the
assistance of an AI language model.

\appendix

\section{Full Gadget Construction}
\label{app:gadget}

This appendix contains the full construction and correctness proof for
the family $\{\Inst_n\}$ used in Lemma~\ref{lem:gadget} and
Case~2 of Theorem~\ref{thm:main}.

\subsection{Construction}

For $n \geq 1$, construct $n$ independent \emph{gadgets}
$G_1, \ldots, G_n$ as follows.

\paragraph{Single gadget $G_i$.}
Introduce two fresh constants $a_i^0$ and $a_i^1$ (the two local
configurations of $G_i$).
Let the \emph{local state} of $G_i$ be a ground term $\tau_i \in
\{a_i^0, a_i^1\}$.
The gadget $G_i$ contributes a clause of the form
$\sk(a_i^0) \neq \skb(a_i^0) \vee \sk(a_i^1) \neq \skb(a_i^1)$
(the disjunction of the two instances of the negated equivalence
$\neg(\sk(x) = \skb(x))$ for $x = a_i^0$ and $x = a_i^1$).
This encodes that at least one of the two instances must be refuted.
A complete refutation of $\Inst_n$ must resolve the disjunction in
each of the $n$ gadget clauses (selecting a witness $a_i^{b_i}$ for
each $G_i$), and also refute the universal negation clause
$\neg(\sk(x) = \skb(x))$ by unifying $x$ with each selected witness.
The $n$ resolutions are independent (one per gadget), so a complete
refutation consists of at least $n$ dedicated rule applications.

\paragraph{Global configuration.}
A \emph{global configuration} $C_{(b_1,\ldots,b_n)}$ is a choice of
witness $a_i^{b_i}$ for each gadget $G_i$, determined by a bit vector
$(b_1,\ldots,b_n) \in \{0,1\}^n$.
The instance $\Inst_n$ is a fixed set of clauses; the $2^n$
configurations represent the $2^n$ possible refutation strategies,
choices of which disjunct to resolve in each gadget clause.
There are $N = 2^n$ global configurations.

\paragraph{Encoding in $\Inst_n$.}
The instance $\Inst_n$ consists of:
\begin{itemize}
  \item the axioms of the base system $\R$ (e.g., rules~\eqref{eq:A1}
        and~\eqref{eq:A2} in the superposition instantiation);
  \item one clause per gadget $G_i$, as described above;
  \item a negation clause $\neg(\sk(x) = \skb(x))$ for a fresh variable
        $x$.
\end{itemize}

\paragraph{Spatial separation of gadgets.}
The constants $a_i^0, a_i^1$ of gadget $G_i$ and the constants
$a_j^0, a_j^1$ of gadget $G_j$ ($i \neq j$) appear in \emph{distinct
clauses} of $\Inst_n$ (each gadget clause mentions only its own
constants), ensuring that no single rule application can act on
constants from two different gadgets simultaneously.
Within each gadget clause, the two constants $a_i^0$ and $a_i^1$
are placed at positions whose tree distance within that clause exceeds
$2r_0$, so no single rule of radius $r_0$ can have both in its
context simultaneously.
Formally, within the clause for $G_i$, if $a_i^0$ appears at
position $p_0$ and $a_i^1$ at position $p_1$ with $d(p_0,p_1) > 2r_0$,
then no context of radius $r_0$ contains both.
This is achievable for any fixed $r_0$ by inserting separator
nodes between the two occurrences.
Under this construction, a rule of radius $r_0$ overlaps with at most
one gadget at a time, confirming the spatial separation used in
Step~4 of the proof of Case~2.

\subsection{Verification of Lemma~\ref{lem:gadget}}

\begin{proof}[Proof of Lemma~\ref{lem:gadget}]
\

\noindent(i) The $2^n$ global configurations are indexed by
$(b_1,\ldots,b_n) \in \{0,1\}^n$, so $N(n) = 2^n$.

\noindent(ii) Any two distinct configurations $C_{(b_1,\ldots,b_n)}$ and
$C_{(b_1',\ldots,b_n')}$ differ in at least one gadget index $i$.
The constants $a_i^0$ and $a_i^1$ are \emph{opaque} with respect to $\Rp$:
they are fresh symbols not appearing in any left-hand side pattern of any
rule in $\Rp$, nor as proper subterms of any left-hand side pattern.
Therefore, for any rule $\rho \in \Rp$ and any position $p$:
even if $\ctx{r_0}{p}(t)$ contains the symbol $a_i^{b_i}$, the
rule cannot use its identity (whether $a_i^0$ or $a_i^1$) to decide
applicability, since both are fresh and absent from all LHS patterns.
The applicability of $\rho$ at $p$ is thus the same in both
configurations, and their local contexts of radius $r_0$ are
identical as seen by any rule of $\Rp$.
Hence the two configurations are indistinguishable within radius $r_0$.

\noindent(iii) $C_{(b_1,\ldots,b_n)} \neq C_{(b_1',\ldots,b_n')}$ in
$\M$ because the constants $a_i^0$ and $a_i^1$ are interpreted as
distinct elements of the domain of $\M$ (they are fresh Skolem constants).

\noindent The bound $c \leq |\Sigma|^{k^{r_0}}$ (where $k$ is the
maximum arity of symbols in $\Sigma$) follows from the fact that
the number of distinct term trees of depth $r_0$ over $\Sigma$ is
at most $|\Sigma|^{k^{r_0}}$, which bounds the number of distinct
local contexts of radius $r_0$ any rule can observe, and hence the
number of configuration classes a single rule application can create.
\end{proof}

\section{Proofs of Auxiliary Lemmas}
\label{app:proofs}

\begin{lemma}[Full proof of Lemma~\ref{lem:invariant-super}]
\label{lem:invariant-super-full}
The property $\Inv$ defined in Lemma~\ref{lem:invariant-super} satisfies
conditions (i)--(v) of Definition~\ref{def:invariant}.
\end{lemma}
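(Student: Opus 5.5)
The plan is to check the five conditions of Definition~\ref{def:invariant} one at a time for the $\Inv$ of Lemma~\ref{lem:invariant-super}. Throughout, I take $\Fprot$ to be the predicate of Lemma~\ref{lem:frozen-super}, and I use two earlier facts: the first-symbol clash argument of Lemma~\ref{lem:frozen-super}, and the freshness of $a,b$ with respect to \eqref{eq:A1}--\eqref{eq:A2}.

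First, (i) and (ii) are direct unfoldings of the definition. For (i), $\Inv(t)$ is a conjunction over the occurrences of $a$ and $b$, with one conjunct per occurrence. Each conjunct refers to the nearest enclosing $+$-node and its two immediate arguments, which lie within distance $r_0=1$ of that node. The reference positions are therefore the finitely many $+$-nodes that dominate an occurrence of $a$ or $b$. For (ii), I inspect the initial clause $a{+}b\neq b{+}a$: the occurrence of $a$ in $a{+}b$ sits in the required context because the right argument $b$ contains $b$, and the other three occurrences are handled symmetrically.

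For (iii), I do a case split on where the rule application at $p$ lies relative to the $a$/$b$-occurrences. Suppose $p$ is disjoint from every $+$-context $a{+}u$ or $b{+}v$ that witnesses $\Inv$. Then those witnessing subterms are copied unchanged, and no new $a$ or $b$ occurrence appears, because the right-hand sides of \eqref{eq:A1}--\eqref{eq:A2} contain neither symbol (freshness). Now suppose $p$ lies at or above such a witnessing node. The redex matches $0{+}x$ or $s(x){+}y$, so the witness node cannot itself be the redex: its first argument is $a$ or $b$, and this is a first-symbol clash. Hence the witness node is strictly below $p$. It lies inside the image of a variable, and $\sigma$ transports it intact into $\sigma(r_\rho)$. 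Finally, $p$ can never be strictly below a witness node's first argument, since that argument is a leaf. So every witness survives.

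Condition (iv) is the delicate step. A literal reading allows $s$ to be an arbitrary term, including one that introduces a bare $a$. That reading would falsify anchorage for any nontrivial invariant, and Remark~\ref{rem:anchorage} indicates it is not intended. My first attempt is therefore to read $s$ as ranging over terms free of $a$ and $b$, which are exactly the replacements any rule of $\R$ can produce by freshness. I also read $\Fprot$ as the protected predicate closed upward to the witnessing $+$-nodes and their second arguments, since the paper stresses that $\Fprot$ is a predicate on $(t,p)$. Under this reading, $\Inv(t[s]_p)$ can fail in only two ways. Either an $a$ or $b$ leaf is removed from its witness, which requires $p$ to be at or above that leaf, or the $b$ inside $u$ is erased, which requires $p$ to be within $u$ on the path to that $b$. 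In both cases $p$ lies in the anchored region. I expect most of the care to go into stating this reading precisely.

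For (v), I argue that $\Inv$ together with (iii) forces every derived literal to place $a$ and $b$ only as arguments of $+$. Consequently no literal equates a term headed by $a$ with one headed by $b$, and coherence follows.
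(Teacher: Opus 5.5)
Your clash-plus-transport argument for (iii) is sound, and more careful than the paper's bare appeal to Lemma~\ref{lem:frozen-super}. However, two steps fail as written, and (iii) has a hole.

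\textbf{Initialization.} In $a{+}b$ the occurrence of $b$ at position $2$ has only $a{+}b$ above it, and $a{+}b$ is not of the form $b{+}v$. So that occurrence is not a subterm of any $b{+}v$, and $\Inv(a{+}b)$ is false. The same holds for the $a$ in $b{+}a$. Of your ``other three occurrences'' only one is symmetric to the first. The paper's proof asserts the same symmetry, so the defect lies in the definition of $\Inv$. You need to weaken it, e.g.\ to ``every occurrence of $a$ or $b$ lies inside some $a{+}u$ with $b$ in $u$ or some $b{+}v$ with $a$ in $v$'', and rerun the remaining checks for that property.

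\textbf{The hole in (iii).} Your case split omits the case where $p$ is strictly inside the second argument $u$ of a witness $a{+}u$. There the witness survives only because \eqref{eq:A1} and \eqref{eq:A2} are non-erasing and $b$ cannot sit at a non-variable position of their left-hand sides, so the rewritten $u$ still contains $b$. You need to say this.

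\textbf{Anchorage.} This is the more serious gap: (iv) cannot be closed merely by stating your reading precisely. With the $\Fprot$ of Lemma~\ref{lem:frozen-super} the condition is false even for $s$ free of $a,b$. Take $t=a{+}(b{+}a)$: it satisfies $\Inv$, $t[0]_2=a{+}0$ does not, and $2\notin\Fprot(t)$. Your remedy enlarges $\Fprot$ to the witnessing $+$-nodes and their second arguments. But Definition~\ref{def:invariant} anchors $\Inv$ to a \emph{protected} set, and the enlarged set is not protected. For example, $a{+}(b{+}(0{+}a))$ satisfies $\Inv$, yet \eqref{eq:A1} fires at position $2{\cdot}2$. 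That position is the second argument of the witness $b{+}(0{+}a)$ and lies below the witness root $\varepsilon$, which violates Definition~\ref{def:protected}.

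To finish, you must impose an explicit extra restriction and say so. One option is to require protectedness only on terms reachable from $\{a{+}b\neq b{+}a\}$; there no rule ever fires, since $a{+}b$ and $b{+}a$ are irreducible. Another is to quantify anchorage only over replacements $s=\sigma(r_\rho)$ produced by rules, which makes it vacuous given (iii). The paper's own (iv) silently does a version of both: it considers only rule applications and counts ``the parent $+$-position'' as part of $\Fprot$. So the missing step is not supplied there either.
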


\begin{proof}
\noindent(i)~\emph{Local checkability.}
$\Inv(t)$ is checked by examining, for each $+$-position $p_{+}$ in $t$,
the context $\ctx{1}{p_{+}}(t)$: whether the left child of $+$
(position $p_{+}{\cdot}1$) is $a$ and the right child
(position $p_{+}{\cdot}2$) contains $b$, and symmetrically.
This requires inspecting subterms within distance $1 = r_0$ of
each $+$-position, as given by $\ctx{r_0}{p_{+}}(t)$.

\noindent(ii)~\emph{Initialization.}
The initial clause set contains only the clause
$a{+}b \neq b{+}a$.
The term $a{+}b$ has $a$ at the left child of $+$ and $b$
at the right child; $\Inv$ holds.
Symmetrically for $b{+}a$.

\noindent(iii)~\emph{Preservation.}
By Lemma~\ref{lem:frozen-super}, no rule of $\R$ is applicable at any
position in $\Fprot$.
In particular, no rule can rewrite inside $a{+}b$ or $b{+}a$.
Terms in the derived clause set may contain $a{+}b$ embedded
in larger contexts (carried over from the initial clause set or
from earlier derivation steps), but no rule can alter the local
context of $a$ or $b$ within those subterms.
Therefore $\Inv$ is preserved under every rule application.

\noindent(iv)~\emph{Anchorage.}
Violating $\Inv$ requires either: (a) moving $a$ to a position where
its parent is not $+$, or the right sibling does not contain $b$; or
(b) symmetrically for $b$.
Both require a rule application at a position in $\Fprot$ (the position
of $a$ or $b$, or their parent $+$-position).
By Definition~\ref{def:protected}, no such rule application is possible
in $\R$.

\noindent(v)~\emph{Coherence.}
Reading the initial clause set $\{a{+}b \neq b{+}a\}$ directly:
the only occurrences of $\sk = a$ and $\skb = b$ in any clause
satisfying $\Inv$ are inside the frozen subterms $a{+}b$ and $b{+}a$
respectively.
Since $a$ appears only as the left argument of $+$ in $a{+}b$, and
$b$ appears only as the left argument of $+$ in $b{+}a$, and since
$a{+}b$ and $b{+}a$ are in separate positions of any clause (they
cannot appear at the same position),
no literal of any clause satisfying $\Inv$ has the form
$a(t_1) = b(t_2)$ (i.e., an equation with $a$ and $b$ as
head symbols on its two sides): the only literal of the initial
clause is $a{+}b \neq b{+}a$, which is a disequation, not an
equation comparing $a$ and $b$ as heads; and by Preservation,
no derived clause introduces such a literal.
Therefore $\Inv(t)$ implies coherence.
\end{proof}

 \bibliographystyle{plain}

\end{document}